\def\BibTeX{{\rm B\kern-.05em{\sc i\kern-.025em b}\kern-.08em
    T\kern-.1667em\lower.7ex\hbox{E}\kern-.125emX}}
\newacronym{LTI}{LTI}{Linear Time-Invariant}
\newacronym{SDP}{SDP}{Semidefinite Program}
\newacronym{MIP}{MIP}{Mixed Integer Program}
\newacronym{IQC}{IQC}{Integral Quadratic Constraint}
\newacronym{SOS}{SOS}{Sum of Squares}
\newacronym{ROA}{RoA}{Region of Attraction}
\newacronym{ReLU}{ReLU}{Rectified Linear Unit}
\newacronym{IBP}{IBP}{Interval Bound Propagation}
\newacronym{REN}{REN}{Recurrent Equilibrium Network}
\newacronym{RNN}{RNN}{Recurrent Neural Network}
\newacronym{GAS}{GAS}{Globally Asymptotically Stable}
\newacronym{LAS}{LAS}{Locally Asymptotically Stable}
\newacronym{LSTM}{LSTM}{Long Short-Term Memory}
\newacronym{LQR}{LQR}{Linear-Quadratic Regulator}
\newacronym{MPC}{MPC}{Model Predictive Control}
\newcommand{\realsN}[1]{\ensuremath{\mathbb{R}^{#1}}}
\newcommand{\integersN}[1]{\ensuremath{\mathbb{Z}^{#1}}}
\newcommand{\transpose}{^\top}
\DeclareMathOperator*{\argmin}{arg\,min}
\DeclareMathAlphabet{\mathcal}{OMS}{cmsy}{m}{n}
\pgfplotsset{compat=1.17}
\pgfplotsset{/pgfplots/layers/niceLayers/.define layer set={
		axis background,axis grid,main,axis ticks,axis lines,axis tick labels,axis descriptions,axis foreground
	}{/pgfplots/layers/standard}
}
\pgfplotsset{every axis/.append style={
		set layers=niceLayers,
		tick label style={font=\scriptsize},
		clip marker paths=true,
		line width=1.5pt,
		line cap=round,
		line join=round,
		tick style={semithick, color=black}
}}
\newtheorem{thm}{Theorem}[section]
\newtheorem{lemma}[thm]{Lemma}
\newtheorem{cor}[thm]{Corollary}
\newtheorem{defn}[thm]{Definition}
\newtheorem{exmp}{Example}
\newtheorem{rem}{Remark}
\newtheorem{assume}{Assumption}
\crefname{thm}{Theorem}{Theorems}
\crefname{lemma}{Lemma}{Lemmas}
\crefname{prop}{Proposition}{Propositions}
\crefname{cor}{Corollary}{Corollaries}
\crefname{defn}{Definition}{Definitions}
\crefname{conj}{Conjecture}{Conjectures}
\crefname{exmp}{Example}{Examples}
\crefname{rem}{Remark}{Remarks}
\crefname{assume}{Assumption}{Assumptions}
\crefname{equation}{}{} 
\Crefname{equation}{Equation}{Equations} 
\crefname{figure}{Fig.}{Figs.} 
\Crefname{figure}{Fig.}{Figs.} 
\crefname{subfigure}{Fig.}{Figs.} 
\Crefname{subfigure}{Fig.}{Figs.} 
\crefname{section}{Section}{Sections} 
\crefname{algorithm}{Algorithm}{Algorithms}
\Crefname{algorithm}{Algorithm}{Algorithms} 
\begin{document}
\title{Improved Sum-of-Squares Stability Verification of Neural-Network-Based Controllers}
\author{Alvaro Detailleur, Guillaume Ducard, Christopher Onder
\thanks{A. Detailleur, G. Ducard and C. Onder are with the Institute for Dynamic Systems and Control (IDSC), Department of Mechanical and Process Engineering, ETH Zurich, Leonhardstrasse 21, 8092 Zürich, Switzerland. (E-mail: adetailleur@student.ethz.ch; onder@idsc.mavt.ethz.ch, Telephone: +41 44 632 87 96)}%
\thanks{G. Ducard is also with Universit{\'e} C\^{o}te d`Azur I3S CNRS, 06903 Sophia Antipolis, France. (E-mail: guillaume.ducard@univ.cotedazur.fr)}%
}

\maketitle

\begin{abstract}
This work presents several improvements to the closed-loop stability verification framework using semialgebraic sets and convex semidefinite programming to examine neural-network-based control systems regulating nonlinear dynamical systems. First, the utility of the framework is greatly expanded: two semialgebraic functions mimicking common, smooth activation functions are presented and compatibility with control systems incorporating \acrfullpl{REN} and thereby \acrfullpl{RNN} is established. 
Second, the validity of the framework's state-of-the-art stability analyses is established via an alternate proof.
Third, based on this proof, two new optimization problems simplifying the analysis of local stability properties are presented. To simplify the analysis of a closed-loop system's \acrfull{ROA}, the first problem explicitly parameterizes a class of candidate Lyapunov functions larger than in previous works. The second problem utilizes the unique guarantees available under the condition of invariance to further expand the set of candidate Lyapunov functions and directly determine whether an invariant set forms part of the system’s \acrshort{ROA}. These contributions are successfully demonstrated in two numerical examples and suggestions for future research are provided.
\end{abstract}

\begin{IEEEkeywords}
closed-loop stability, neural networks, semidefinite programming (SDP), sum of squares (SOS)
\end{IEEEkeywords}

\section{Introduction}
\label{sec:introduction}
\IEEEPARstart{T}{he} interest of the control engineering community in (artificial) neural networks dates back to at least the end of the 1980s \cite{mybibfile:Hunt1992}. Researchers have long recognized that neural networks possess several properties that make them uniquely suited for use in control applications. In particular:
\begin{enumerate}
    \item According to (variants of) the universal function approximation theorem \cite{mybibfile:Hanin2019}, large enough neural networks are capable of approximating continuous functions with arbitrary accuracy, which implies neural networks can be used to parameterize highly complex control laws or model strongly nonlinear systems.
    \item Neural networks represent a method to store such complex parameterizations whilst requiring relatively low computational requirements compared to traditional optimization algorithms used in (optimal) control, allowing them to be evaluated on embedded hardware platforms and/or at a high frequency \cite{mybibfile:Gonzalez2024}.
\end{enumerate}

\begin{figure}[t]
	\centering
	\def\svgwidth{0.75\columnwidth}
\begingroup%
  \makeatletter%
  \providecommand\color[2][]{%
    \errmessage{(Inkscape) Color is used for the text in Inkscape, but the package 'color.sty' is not loaded}%
    \renewcommand\color[2][]{}%
  }%
  \providecommand\transparent[1]{%
    \errmessage{(Inkscape) Transparency is used (non-zero) for the text in Inkscape, but the package 'transparent.sty' is not loaded}%
    \renewcommand\transparent[1]{}%
  }%
  \providecommand\rotatebox[2]{#2}%
  \newcommand*\fsize{\dimexpr\f@size pt\relax}%
  \newcommand*\lineheight[1]{\fontsize{\fsize}{#1\fsize}\selectfont}%
  \ifx\svgwidth\undefined%
    \setlength{\unitlength}{195.70528903bp}%
    \ifx\svgscale\undefined%
      \relax%
    \else%
      \setlength{\unitlength}{\unitlength * \real{\svgscale}}%
    \fi%
  \else%
    \setlength{\unitlength}{\svgwidth}%
  \fi%
  \global\let\svgwidth\undefined%
  \global\let\svgscale\undefined%
  \makeatother%
  \begin{picture}(1,0.63338912)%
    \lineheight{1}%
    \setlength\tabcolsep{0pt}%
    \put(0,0){\includegraphics[width=\unitlength,page=1]{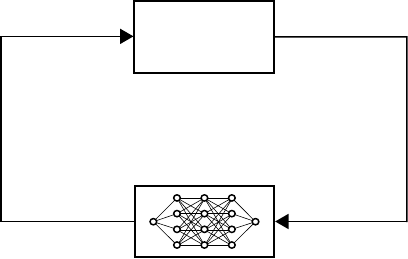}}%
    \put(0.3824088,0.53274254){\makebox(0,0)[lt]{\lineheight{1.25}\smash{\begin{tabular}[t]{l}$x^+ = f(x, u)$\end{tabular}}}}%
    \put(0.04233083,0.12024308){\makebox(0,0)[lt]{\lineheight{1.25}\smash{\begin{tabular}[t]{l}$u(x) = \varphi(x)$\end{tabular}}}}%
    \put(0.75198126,0.56578001){\makebox(0,0)[lt]{\lineheight{1.25}\smash{\begin{tabular}[t]{l}$x$\end{tabular}}}}%
  \end{picture}%
\endgroup%

	\caption{Block diagram of a general discrete-time closed-loop dynamical system with a neural-network-based controller. The stability properties of systems of this form are examined in this work.}
	\label{fig:LFTPlantNeuralNetwork}
\end{figure}

However, establishing stability properties of closed-loop systems utilizing a neural network as a controller, as shown in \cref{fig:LFTPlantNeuralNetwork}, is a non-trivial task \cite{mybibfile:Norris2021}. Traditional stability conditions used for \acrfull{LTI} systems are not directly applicable to certify stability for these so-called neural-network-based controllers \cite[Part~2]{mybibfile:Liberzon2002}. In addition, deciding the asymptotic stability of a closed-loop system containing a neural network made up of a single neuron is in general already an NP-Hard problem \cite{mybibfile:Blondel1999, mybibfile:Korda2022}.

\subsection{Stability Verification Frameworks}
For neural-network-based controllers to gain more widespread adoption, particularly in safety-critical applications, a systematic framework to analyze the stability properties of closed-loop systems containing such controllers is required. Given the size and complexity of these controllers, stability analysis frameworks commonly utilize \acrfullpl{SDP} or \acrfullpl{MIP} to pose the search for a stability certificate as an optimization problem. Previous successful analysis methods include Reach-SDP \cite{mybibfile:Hu2020}, which over-approximates forward reachable sets of the closed-loop system, \acrshort{MIP}-based methods, which certify stability properties via a comparison to existing robust, optimization-based controllers \cite{mybibfile:Dubach2022,mybibfile:Richardson2023,mybibfile:Simon2016,mybibfile:Schwan2023}, and \acrfull{IQC} as well as incremental quadratic constraint-based approaches \cite{mybibfile:Pauli2021, mybibfile:Yin2022, mybibfile:Revay2020}, which characterize the input-output properties of neural-network-based controllers via quadratic constraints and find a suitable stability certificate via an optimization problem.

\subsection{Contributions}
In contrast to the inherently inexact system description utilized in other methods for neural-network stability analysis, the stability verification framework developed in this work leverages: a) semialgebraic sets, allowing an exact description of the input-output properties of neural networks, and b) \acrfull{SOS} programming, a form of semidefinite programming \cite{mybibfile:Korda2022, mybibfile:Newton2022, mybibfile:Korda2017}. The main contributions of this paper towards an improved \acrshort{SOS} stability verification framework are summarized as follows:
\begin{enumerate}
    \item New semialgebraic activation functions mimicking the fundamental properties of the common $\text{softplus}$ and $\text{tanh}$ activation functions are presented, extending exact neural-network descriptions beyond piecewise affine activation functions such as $\text{ReLU}$.
    \item The class of neural networks that can be addressed for stability analysis is greatly extended. In particular, the compatibility of this framework with \acrfullpl{REN} is proven \cite{mybibfile:Revay2024}.
    \item A novel proof is provided with more precise conditions under which a solution to an \acrshort{SOS} optimization program directly provides a valid (local) stability certificate for the closed-loop system.
    \item Finally, utilizing the aforementioned proof, two new optimization problems for the analysis of local stability properties of closed-loop systems utilizing neural-network-based controllers are presented. Each problem simplifies the local stability analysis and reduces the reliance on prior system knowledge and heuristics. 
\end{enumerate}
Two numerical examples highlight the contributions above, in particular the enlarged class of networks suitable for analysis and the novel local stability analysis algorithms. 

This work is organized as follows: \Cref{sec:StateOfTheArt} presents the current stability verification framework as it was developed for (deep) feedforward neural networks. \Cref{sec:ModelingContributions,sec:StabilityVerificationContributions} discuss contributions to the class of neural networks suitable for stability analysis in this framework, and extensions to the framework for local stability analysis, respectively. \Cref{sec:NumericalResults} provides two numerical examples showcasing our contributions. \Cref{sec:Conclusion} outlines future research topics that could further contribute to the developed framework.

\subsection{Notation}
This work uses the following notational conventions:
\begin{itemize}
    \item In the analysis of discrete-time systems, the plus superscript indicates the successor variable, e.g. $x, \ x^+ \in \realsN{n}$ represent the current and successor state, respectively.
    \item All inequalities are defined element-wise, and may be interpreted as a positive semidefiniteness or \acrshort{SOS} condition, in which case this will be explicitly mentioned. $P \succ 0$, $P \succeq 0$ denote a positive definite and positive semidefinite matrix $P$, respectively.
    \item All optimization problems presented in this work are assumed to consist of bounded degree polynomials. In addition, $\sigma$ and $p$ are used to consistently denote \acrshort{SOS} polynomials and arbitrary polynomials, respectively.
    \item The set $\{1, \dots, n\}$ is denoted as $[n]$. Multi-index notation is used to select a subset of entries in a vector, and matrix rows and columns are indexed using one-based programming notation, i.e. $A_{22,[1,:]}$ refers to the first row of matrix $A_{22}$. Entries denoted by $*$ in a matrix indicate the matrix is assumed symmetric. As such, values can be deduced from the relevant entry in the upper triangular portion of the matrix.
    \item The identity map is denoted $\textrm{id}$.
    \item The interior of a set $\mathcal{Q}$ is denoted $\text{int}(\mathcal{Q})$.
    \item The Minkowski sum and Pontryagin difference are denoted by $\oplus$ and $\ominus$, respectively.
    \item Unless specified otherwise, all norms represent the Euclidean norm.
\end{itemize}

\section{SOS Stability Verification Framework}
\label{sec:StateOfTheArt}
Throughout this work, \acrshortpl{SDP} are used to examine the stability properties of closed-loop systems of the form
\begin{equation}
    \label{eq:GeneralClosedLoopSystem}
    x^+ = f\left(x, \varphi(x)\right),  
\end{equation}
where $f \colon \realsN{n} \times \realsN{p} \mapsto \realsN{n}$ is a (nonlinear) time-invariant, discrete-time dynamical system receiving a control input $u \in \realsN{p}$ generated by a neural network $\varphi \colon \realsN{n} \mapsto \realsN{p}$
acting as a state-feedback controller, as shown in \cref{fig:LFTPlantNeuralNetwork}. Throughout this work it is assumed that the closed-loop system possesses an equilibrium at its origin, $0 = f\left(0, \varphi(0)\right)$, and that the closed-loop system $f \circ (\textrm{id}, \varphi)$ is locally bounded. 

In this section, as in previous works on \acrshort{SOS} stability verification \cite{mybibfile:Korda2022,mybibfile:Newton2022}, the control input $u$ is assumed to be generated by an $\ell$-layer feedforward neural network $\varphi$ described in full generality by
\begin{alignat}{2}
    \varphi(x) &= f_{\ell+1} \circ \phi_\ell \circ f_{\ell} \circ \, \ldots \circ \phi_1 \circ f_1(x), \ &&
    \label{eq:OverallFFNeuralNet} \\
    f_i(x_i) &= W_i x_i + b_i. \quad && 
    \label{eq:MatrixMultiplicationNeuralNet} 
\end{alignat}
Here, $W_i \in \realsN{n_{i} \times n_{i-1}}$, $b_i \in \realsN{n_i}$, $\phi_i \colon \realsN{n_i} \mapsto \realsN{n_{i}}$ represent the weights, biases and stacked activation functions of layer $i$, respectively, and $n_i$ denotes the number of inputs to (hidden) layer $i+1$.

\subsection{Semialgebraic Set Description}
In order to obtain a stability certificate of closed-loop system \labelcref{eq:GeneralClosedLoopSystem}, a description of the input-output relation of a) the neural network $\varphi$, and b) the composed loop $L = \varphi \circ f \circ (\textrm{id}, \varphi)$ is obtained by means of their graphs. The open-loop system $f$ is assumed to be polynomial, and the graphs of the functions $\varphi$ and $L$ are assumed to be able to be expressed via lifting variables $\lambda$ and polynomial equalities and inequalities for all $x \in \realsN{n}$, e.g.
\begin{multline}
    \label{eq:GraphPolynomialRepresentation}
    \left(x, \varphi(x)\right) = \big\{ \!  \left. \left(x, u\right) \in \realsN{n} \times \realsN{p} \; \right| \; \exists \lambda \in \realsN{n_\lambda} \ \text{s.t.} \\
    g(x, \lambda, u) \geq 0, \ h(x, \lambda, u) = 0 \big\},
\end{multline}
with $g \colon \realsN{n} \times \realsN{n_\lambda} \times \realsN{p} \mapsto \realsN{n_g}$ and $h \colon \realsN{n} \times \realsN{n_\lambda} \times \realsN{p} \mapsto \realsN{n_h}$ polynomial functions. For the feedforward network of \cref{eq:OverallFFNeuralNet,eq:MatrixMultiplicationNeuralNet}, this assumption is met if all activation functions $\phi_i$ are semialgebraic. In this case, lifting variables $\lambda$ can be chosen to represent the outputs of the network's hidden layers,  
\begin{equation}
       \lambda_0 = x, \ \lambda_{i} = \phi_i \circ f_i \, (\lambda_{i-1})  \quad \forall i \in \left[\ell\right],
    \label{eq:HiddenLayerNeuralNet}
\end{equation}
and the feedforward neural network can be interpreted as a mapping from $x$ to $\left(\lambda(x), \varphi(x)\right)$. Neural-network-based controllers utilizing non-semialgebraic activation functions may also be examined by defining semialgebraic sets over-approximating the graph of these activation functions \cite{mybibfile:Newton2022,mybibfile:Newton2021}.

For brevity of notation, let $\zeta\transpose$ denote $[x\transpose, \lambda\transpose, u\transpose]$, $\zeta^+$ denote the vector of corresponding successor variables, and $\xi\transpose$ denote $[\zeta\transpose, \zeta^{+ \, \textsf{T}}]$. With this notation, the semialgebraic sets 
\begin{alignat}{1}
    \label{eq:SemialgebraicNetworkSet}
    & \mathbf{K}_\varphi = \big\{ \zeta \in \realsN{n + n_\lambda + p} \mid g(\zeta) \geq 0, \ h(\zeta) = 0 \big\}, \quad \\
    \label{eq:SemialgebraicComposedLoopSet}  
    & \begin{aligned}
         \mathbf{K}_L = & \left\{
        \xi \in \realsN{2(n + n_\lambda + p)}
        \;  \left\vert \; 
        \begin{bmatrix}
    		g(\zeta) \\
            g(\zeta^+)
        \end{bmatrix} \geq 0 , \right. \right.  \\
        & \qquad \qquad \qquad \qquad \qquad \  \left.
        \begin{bmatrix}
    		h(\zeta) \\
            h(\zeta^+)  \\
    		x^+ - f(x, u)  
        \end{bmatrix} = 0 
        \right\},
    \end{aligned}
\end{alignat}
will serve as the `system model' going forward. As conceptually illustrated in \cref{fig:SemialgebraicSetModel}, these sets encompass all valid combinations of inputs, lifting variables and outputs of the neural network, in addition to all valid state transitions of the open-loop system. This modeling approach is capable of generating an exact representation of neural networks, in contrast to the approximate approach in other quadratic constraint-based methods \cite{mybibfile:Korda2022}.

\begin{figure*}[t]
    \centering
        \begin{subfigure}{0.33\textwidth} 
        \centering
        \def\svgwidth{0.95\linewidth}
\begingroup%
  \makeatletter%
  \providecommand\color[2][]{%
    \errmessage{(Inkscape) Color is used for the text in Inkscape, but the package 'color.sty' is not loaded}%
    \renewcommand\color[2][]{}%
  }%
  \providecommand\transparent[1]{%
    \errmessage{(Inkscape) Transparency is used (non-zero) for the text in Inkscape, but the package 'transparent.sty' is not loaded}%
    \renewcommand\transparent[1]{}%
  }%
  \providecommand\rotatebox[2]{#2}%
  \newcommand*\fsize{\dimexpr\f@size pt\relax}%
  \newcommand*\lineheight[1]{\fontsize{\fsize}{#1\fsize}\selectfont}%
  \ifx\svgwidth\undefined%
    \setlength{\unitlength}{271.41416893bp}%
    \ifx\svgscale\undefined%
      \relax%
    \else%
      \setlength{\unitlength}{\unitlength * \real{\svgscale}}%
    \fi%
  \else%
    \setlength{\unitlength}{\svgwidth}%
  \fi%
  \global\let\svgwidth\undefined%
  \global\let\svgscale\undefined%
  \makeatother%
  \begin{picture}(1,0.30529543)%
    \lineheight{1}%
    \setlength\tabcolsep{0pt}%
    \put(0,0){\includegraphics[width=\unitlength,page=1]{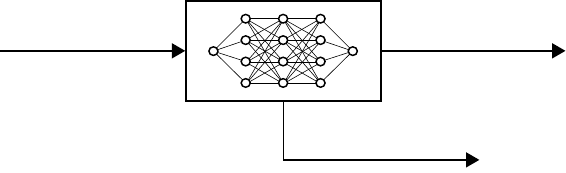}}%
    \put(0.59812475,0.05068204){\makebox(0,0)[lt]{\lineheight{1.25}\smash{\begin{tabular}[t]{l}$\lambda(x)$\end{tabular}}}}%
    \put(0.11245526,0.24259885){\makebox(0,0)[lt]{\lineheight{1.25}\smash{\begin{tabular}[t]{l}$x$\end{tabular}}}}%
    \put(0.81139142,0.2426014){\makebox(0,0)[lt]{\lineheight{1.25}\smash{\begin{tabular}[t]{l}$u(x)$\end{tabular}}}}%
  \end{picture}%
\endgroup%

        \caption{}
        \label{subfig:NeuralNetworkGraph}
    \end{subfigure}
    \hfill
    \begin{subfigure}{0.65\textwidth}
        \centering
        \def\svgwidth{0.95\linewidth}
\begingroup%
  \makeatletter%
  \providecommand\color[2][]{%
    \errmessage{(Inkscape) Color is used for the text in Inkscape, but the package 'color.sty' is not loaded}%
    \renewcommand\color[2][]{}%
  }%
  \providecommand\transparent[1]{%
    \errmessage{(Inkscape) Transparency is used (non-zero) for the text in Inkscape, but the package 'transparent.sty' is not loaded}%
    \renewcommand\transparent[1]{}%
  }%
  \providecommand\rotatebox[2]{#2}%
  \newcommand*\fsize{\dimexpr\f@size pt\relax}%
  \newcommand*\lineheight[1]{\fontsize{\fsize}{#1\fsize}\selectfont}%
  \ifx\svgwidth\undefined%
    \setlength{\unitlength}{511.75989743bp}%
    \ifx\svgscale\undefined%
      \relax%
    \else%
      \setlength{\unitlength}{\unitlength * \real{\svgscale}}%
    \fi%
  \else%
    \setlength{\unitlength}{\svgwidth}%
  \fi%
  \global\let\svgwidth\undefined%
  \global\let\svgscale\undefined%
  \makeatother%
  \begin{picture}(1,0.15725041)%
    \lineheight{1}%
    \setlength\tabcolsep{0pt}%
    \put(0,0){\includegraphics[width=\unitlength,page=1]{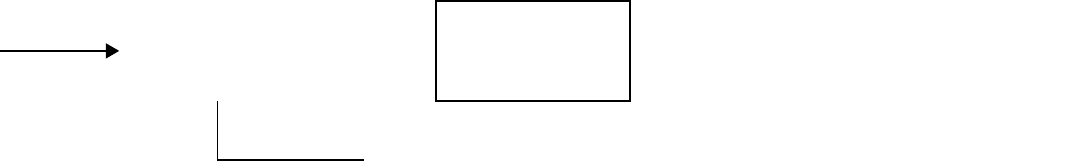}}%
    \put(0.42310011,0.10431159){\makebox(0,0)[lt]{\lineheight{1.25}\smash{\begin{tabular}[t]{l}$x^+ = f(x,u)$\end{tabular}}}}%
    \put(0,0){\includegraphics[width=\unitlength,page=2]{OpenLoop_SemialgebraicSet.pdf}}%
    \put(0.24653124,0.02215671){\makebox(0,0)[lt]{\lineheight{1.25}\smash{\begin{tabular}[t]{l}$\lambda(x)$\end{tabular}}}}%
    \put(0.04243475,0.12394066){\makebox(0,0)[lt]{\lineheight{1.25}\smash{\begin{tabular}[t]{l}$x$\end{tabular}}}}%
    \put(0.32816359,0.12394201){\makebox(0,0)[lt]{\lineheight{1.25}\smash{\begin{tabular}[t]{l}$u(x)$\end{tabular}}}}%
    \put(0,0){\includegraphics[width=\unitlength,page=3]{OpenLoop_SemialgebraicSet.pdf}}%
    \put(0.83862644,0.02215671){\makebox(0,0)[lt]{\lineheight{1.25}\smash{\begin{tabular}[t]{l}$\lambda^+(x)$\end{tabular}}}}%
    \put(0.61482338,0.12394066){\makebox(0,0)[lt]{\lineheight{1.25}\smash{\begin{tabular}[t]{l}$x^+(x)$\end{tabular}}}}%
    \put(0.91078868,0.12394201){\makebox(0,0)[lt]{\lineheight{1.25}\smash{\begin{tabular}[t]{l}$u^+(x)$\end{tabular}}}}%
  \end{picture}%
\endgroup%

        \caption{}
        \label{subfig:OpenLoopGraph}
    \end{subfigure}
    \caption{Graph of \subref{subfig:NeuralNetworkGraph} the neural network $\varphi$ captured by the semialegbraic set $\mathbf{K}_\varphi$ and \subref{subfig:OpenLoopGraph} the composed loop $L = \varphi \circ f \circ (\textrm{id}, \varphi)$ captured by the semialegbraic set $\mathbf{K}_{L}$.}
    \label{fig:SemialgebraicSetModel}
\end{figure*}

An important class of systems for which the assumptions of this Section are satisfied consists of open-loop polynomial state transition maps $f$ in closed-loop with (deep) feedforward neural-network-based controllers using exclusively \acrfull{ReLU} activation functions. This follows directly by considering the graph of a single \acrshort{ReLU} neuron, $\lambda_1 = \text{ReLU}(W_1 x + b_1)$. For all $x \in \realsN{}$, 
\begin{equation}
    \label{eq:SingleReLUSemialgebraicSet}
    \left(x, \lambda_1(x)\right) =
    \left\{ (x, y) \in \realsN{2} \, \left\vert \:
    \begin{alignedat}{1}
        & y \geq 0, \ y - W_1 x - b_1 \geq 0 \\
        & y \left(y - W_1 x - b_1 \right) = 0
    \end{alignedat}
    \right. \right\}.
\end{equation}
By repeated application of \cref{eq:SingleReLUSemialgebraicSet}, the semialgebraic sets of \cref{eq:SemialgebraicNetworkSet,eq:SemialgebraicComposedLoopSet} can be set up to describe the feedforward \acrshort{ReLU} network $\varphi$ and the composed loop $L$ exactly.

\subsection{Global Stability Verification}
\label{sec:OrigGlobalStability}
Given a description of the closed-loop system via $\mathbf{K}_\varphi$, $\mathbf{K}_L$, a certificate of global asymptotic stability is sought.
\begin{defn}[Global Asymptotic Stability]
    Closed-loop system \labelcref{eq:GeneralClosedLoopSystem} is \acrfull{GAS} if:
    \begin{itemize}
        \item it is stable in the sense of Lyapunov, i.e. for every $\epsilon > 0$, $\exists \delta(\epsilon) > 0$ such that $\|x(0)\|^2 < \delta \implies \|x(k)\|^2 < \epsilon$ for all $k \in \integersN{}_{\geq 0}$.
        \item it is globally attractive, i.e. $\lim_{k \to \infty} \|x(k)\| = 0$ for all $x(0) \in \realsN{n}$.
    \end{itemize}
\end{defn}

\acrshort{SOS} polynomials form an essential tool in this search for such a stability certificate. 

\begin{defn}[\acrshort{SOS} polynomial]
    A polynomial $\sigma$ is \acrshort{SOS} if it admits a decomposition as a sum of squared polynomials. Given a vector of appropriate monomial terms $\nu(\zeta)$ and a matrix of coefficients $L$, \acrshort{SOS} polynomials admit an equivalent semidefinite representation
    \begin{equation}
        \label{eq:SOSDefinition}
        \sigma(\zeta) = \sum_i \sigma_i(\zeta)^2 = \sum_i (L_{[i,:]} \nu(\zeta))^2 = \nu(\zeta)\transpose \underbrace{L\transpose L}_{\succeq 0} \nu(\zeta).        
    \end{equation}    
    This relation allows \acrshort{SOS} polynomials to be used in \acrshortpl{SDP}. See the work of Parrilo \cite{mybibfile:Parrilo2003} for more information.
\end{defn}

A rich set of non-negative polynomial functions for all $\zeta \in \mathbf{K}_\varphi$ are parameterized via \acrshort{SOS} polynomials as
\begin{equation}
    V(\zeta) = \sigma^V(\zeta) + \sigma^{V}_{\textrm{ineq}}(\zeta)\transpose g(\zeta),
    \label{eq:OrigLyapunovParam}
\end{equation}
with $\sigma^V$ a scalar \acrshort{SOS} polynomial and $\sigma^V_{\textrm{ineq}}$ a vector of \acrshort{SOS} polynomials. To allow $V$ to be interpreted as a continuous function of $x$, i.e. $V\left(\zeta(x)\right)$, the following is assumed throughout this work.
\begin{assume}
    \label{assume:ContinuityInLiftingVariables}
    Let $\lambda_{\mathcal{I}_V}$, $u_{\mathcal{J}_V}$ denote the entries of $\lambda$ and $u$ used in the parameterization of $V$, respectively. Assume the entries $\mathcal{J}_V$ of $\varphi$ are continuous in $x$, and there exists a continuous function $f^V_{\lambda} \colon \realsN{n} \mapsto \realsN{|\mathcal{I}_V|}$, such that for all $x \in \realsN{n}$, there exists $[x\transpose, \lambda\transpose, \varphi(x)\transpose]\transpose \in \mathbf{K}_\varphi$ with $\lambda_{\mathcal{I}_V} = f^V_\lambda(x)$.
\end{assume}

By continuity of the \acrshort{ReLU} function and the semialgebraic set description of \cref{eq:SingleReLUSemialgebraicSet}, these assumptions are satisfied for $\mathcal{I}_V = [n_\varphi]$, $\mathcal{J}_V = [p]$ in (deep) feedforward \acrshort{ReLU} networks, allowing a very large class of piecewise-defined functions $V$ to be used.

An inequality condition requiring the value of $V$ to decrease between consecutive time steps is formulated as
\begin{multline}
    V(\zeta) - V(\zeta^+) - \|x\|^2 \geq \sigma^{\Delta V}(\xi) + \\ \sigma^{\Delta V}_{\textrm{ineq}}(\xi) \transpose
    \begin{bmatrix}
        g(\zeta) \\
        g(\zeta^+)
    \end{bmatrix} +
    p^{\Delta V}_{\textrm{eq}}(\xi)\transpose
    \begin{bmatrix}
        h(\zeta) \\
        h(\zeta^+) \\
        x^+ - f(x, u)
    \end{bmatrix},
    \label{eq:LyapunovDecreaseCond}
\end{multline}
with $\sigma^{\Delta V}$ a scalar \acrshort{SOS} polynomial, $\sigma^{\Delta V}_{\textrm{ineq}}$ a vector of \acrshort{SOS} polynomials and $p^{\Delta V}_{\textrm{eq}}$ a vector of arbitrary polynomials. 

Rearranging \cref{eq:LyapunovDecreaseCond} to the standard form $(...) \geq 0$ and requiring the left-hand side of this equation to be a \acrshort{SOS} polynomial, the well-documented relation between \acrshort{SOS} polynomials and quadratic forms allows the \acrshort{SDP}
\begin{subequations}
    \label{eq:SDPFormulationGlobalAsymptoticStability}
    \begin{alignat}{4}
        &\span\span \text{find:} \ & \sigma^V, \, \sigma^V_{\textrm{ineq}}, \ \ \, \,
        & \!\!\! \! \! \! \! \sigma^{\Delta V},  \, \sigma^{\Delta V}_{\textrm{ineq}}, \,  p^{\Delta V}_{\textrm{eq}} \span \nonumber \\
        &\span\span \text{s.t.} \ & \eqref{eq:OrigLyapunovParam}, \ & \eqref{eq:LyapunovDecreaseCond}, \span \\
        &\span\span & \sigma^V, \, \sigma^V_{\textrm{ineq}}, \sigma^{\Delta V},  \, \sigma^{\Delta V}_{\textrm{ineq}} \quad & \text{\acrshort{SOS} polynomials,} \\
        &\span\span & p^{\Delta V}_{\textrm{eq}} \quad & \text{arbitrary polynomials,}
    \end{alignat}
\end{subequations}
to be set up, which searches for a function $V$ satisfying \cref{eq:OrigLyapunovParam,eq:LyapunovDecreaseCond}. The solution to \acrshort{SDP} \labelcref{eq:SDPFormulationGlobalAsymptoticStability} defines a valid global stability certificate if \cref{assume:ContinuityInLiftingVariables} is satisfied.
\begin{thm}[Certificate of \acrshort{GAS} system, Thm. 2 \cite{mybibfile:Korda2022}]
    \label{thm:GASSDPProof}
    Under \cref{assume:ContinuityInLiftingVariables}, any solution to \acrshort{SDP} \labelcref{eq:SDPFormulationGlobalAsymptoticStability} proves closed-loop system \labelcref{eq:GeneralClosedLoopSystem} is \acrshort{GAS}.
\end{thm}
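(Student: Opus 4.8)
The plan is to extract from any feasible point of \acrshort{SDP} \labelcref{eq:SDPFormulationGlobalAsymptoticStability} a bona fide Lyapunov function for the closed-loop map $F \coloneqq f \circ (\textrm{id}, \varphi)$, and then run the standard discrete-time Lyapunov argument — the twist being that the candidate is only available implicitly through the lifting variables, and that its value at the origin is not pinned down a priori. First I would use \cref{assume:ContinuityInLiftingVariables} to turn $V$ into a continuous function of the state alone: since $V$ depends on $\zeta$ only through $x$, the entries $\lambda_{\mathcal{I}_V}$ and $u_{\mathcal{J}_V}$, all continuous in $x$, the map $\tilde{V}(x) \coloneqq V\big(\zeta(x)\big)$ obtained by setting $\lambda_{\mathcal{I}_V} = f^V_\lambda(x)$ and $u_{\mathcal{J}_V} = \varphi(x)_{\mathcal{J}_V}$ is well defined (independent of the irrelevant entries of $\lambda$) and continuous on $\realsN{n}$.

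Next I would read off the two sign properties. On $\mathbf{K}_\varphi$ one has $V = \sigma^V + \sigma^V_{\textrm{ineq}}\transpose g \geq 0$, because $\sigma^V$ is \acrshort{SOS} and each product $\sigma^V_{\textrm{ineq}}\transpose g$ pairs an \acrshort{SOS} multiplier with the non-negative $g$; hence $\tilde{V} \geq 0$. On $\mathbf{K}_L$ the entire right-hand side of \cref{eq:LyapunovDecreaseCond} is non-negative (the \acrshort{SOS} terms are non-negative, the $g$-multipliers multiply non-negative quantities, and the equality multipliers are annihilated since $h = 0$ and $x^+ = f(x,u)$), so the imposed \acrshort{SOS} certificate forces $V(\zeta) - V(\zeta^+) \geq \|x\|^2$ there. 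The crucial bookkeeping step — which I expect to be the main obstacle to a clean argument — is transferring this set inequality to genuine trajectories through a consistent continuous selection of lifting variables: one must check that a one-step transition furnishes a point of $\mathbf{K}_L$ compatible with $\zeta(\cdot)$. This holds because $\zeta(x), \zeta(x^+) \in \mathbf{K}_\varphi$ by construction and $x^+ - f(x,\varphi(x)) = 0$, so $\xi = [\zeta(x)\transpose, \zeta(x^+)\transpose]\transpose \in \mathbf{K}_L$, giving $\tilde{V}(x) - \tilde{V}(F(x)) \geq \|x\|^2$ for every $x \in \realsN{n}$.

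With these in hand I would close the argument in two moves. Telescoping the decrease along a trajectory and using $\tilde{V} \geq 0$ gives $\sum_{k=0}^{\infty} \|x(k)\|^2 \leq \tilde{V}(x(0)) < \infty$, hence $x(k) \to 0$ for every initial condition, which is global attractivity. The second, more delicate move handles Lyapunov stability: the parameterization does not guarantee $\tilde{V}(0) = 0$, so $\tilde{V}$ itself is not yet a Lyapunov function. I would repair this by a shift. Since $\tilde{V}$ is non-increasing along trajectories and continuous with $x(k) \to 0$, every trajectory obeys $\tilde{V}(x(0)) \geq \lim_{k} \tilde{V}(x(k)) = \tilde{V}(0)$, so $\tilde{V}(x) \geq \tilde{V}(0)$ for all $x$. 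Then $W \coloneqq \tilde{V} - \tilde{V}(0)$ is continuous, vanishes at the origin, inherits $W(x) - W(F(x)) \geq \|x\|^2$, and satisfies $W(x) \geq \|x\|^2 + W(F(x)) \geq \|x\|^2$, i.e.\ it is positive definite and radially unbounded. Lyapunov stability now follows purely from continuity at the origin: $\|x(k)\|^2 \leq W(x(k)) \leq W(x(0))$, and $W(x(0)) < \epsilon$ whenever $\|x(0)\|$ is small enough since $W(0)=0$. Combining this with global attractivity yields \acrshort{GAS}.
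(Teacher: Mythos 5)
Your proposal is correct, and both of the delicate points are handled properly: the transfer of the set-level inequalities on $\mathbf{K}_\varphi$, $\mathbf{K}_L$ to actual trajectories via the continuous selection $\lambda_{\mathcal{I}_V} = f^V_\lambda(x)$ guaranteed by \cref{assume:ContinuityInLiftingVariables}, and the fact that $\tilde{V}(0)$ need not be zero. Your route is, however, genuinely different from the paper's own argument. The paper (which states the theorem as imported from \cite{mybibfile:Korda2022} and supplies its own alternate proof in \cref{sec:ImprovedStabilityCertificateAnalysis}) proceeds \emph{statically}: \cref{lemma:MinimumLemma} shows that on any closed invariant set the non-negativity and decrease conditions force $\argmin_x V = \{0\}$, by splitting the set into the compact piece $\{\|x\|^2 \leq V(0)\}$ and its complement and deriving a contradiction from a single step of the dynamics; the proof of \cref{thm:CustomLyapunovFunctionProof}, applied with $\mathcal{X} = \realsN{n}$, then shifts $V$ by $V(0)$, sandwiches the shift between class-$\mathcal{K}_\infty$ bounds (the upper bound built Kalman-style following \cite{mybibfile:Kalman1960_CT}), and concludes by citing the standard discrete-time Lyapunov theorem of \cite{mybibfile:Rawlings2017}. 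You instead proceed \emph{dynamically}: telescoping the decrease gives global attractivity outright, and the minimum property $\tilde{V}(x) \geq \tilde{V}(0)$ --- the content of the paper's lemma --- is recovered afterwards as the limit of the monotone sequence $\tilde{V}(x(k))$ along a convergent trajectory, so that stability follows from an elementary $\epsilon$--$\delta$ argument using only continuity of $\tilde{V}$ at the origin. What your argument buys is self-containedness: no comparison-function machinery, no external theorem, and no use of the standing local-boundedness assumption in the global case. What the paper's argument buys is reusability: \cref{lemma:MinimumLemma} is a trajectory-free statement about $V$ on an arbitrary closed invariant set, proved once and then reused verbatim for the local results (\cref{thm:CustomLyapunovFunctionProof,thm:ValidityOptimProblemInvariantLocalAsymptoticStability}) and to justify the restricted parameterizations of \cref{sec:ReducedParamV,sec:BarrierFunc}, whereas your derivation of the minimum property is interleaved with the attractivity proof and would need to be repackaged to serve those later purposes.
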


\subsection{Local Stability Verification}
\label{sec:OrigLocalStability}
Given the prevalence of state and/or input constraints, real-world systems may not possess or require a globally stabilizing controller. Therefore, in practical applications, local stability properties are of great importance. In particular, a certificate of local asymptotic stability and an (inner) estimate of the region of attraction are often desired.
\begin{defn}[Local Asymptotic Stability]
    Closed-loop system \labelcref{eq:GeneralClosedLoopSystem} is \acrfull{LAS} if
    \begin{itemize}
        \item it is stable in the sense of Lyapunov,
        \item it is locally attractive, i.e. $\exists \eta > 0$ such that $\lim_{k \to \infty} \|x(k)\| = 0$ for all $\|x(0)\| < \eta$.
    \end{itemize}
\end{defn}
\begin{defn}[Region of Attraction]
    The \acrfull{ROA} of the equilibrium point $x = 0$ of closed-loop system \labelcref{eq:GeneralClosedLoopSystem} is the set $\mathcal{X}_{\textrm{RoA}}$ of all points such that $x \in \mathcal{X}_{\textrm{RoA}}$, $\lim_{k\to\infty} \|x(k)\| = 0$.
\end{defn}

Such a local stability analysis can be achieved by limiting the semialgebraic description of the neural network $\varphi$ and the composed loop $L$
to a predetermined, closed set $\mathcal{Q}$, defined as 
\begin{equation}
    \mathcal{Q} = \big\{ x \in \realsN{n} \mid q(x) \geq 0 \big\},
    \label{eq:LocalRegionQDefinition}
\end{equation}
where $q\colon \realsN{n} \mapsto \realsN{n_q}$ is a (vector-valued) polynomial function. To ensure the stability verification problems presented are well-posed, the following is assumed.
\begin{assume}
    \label{assume:qClosedOriginInterior}
    The set $\mathcal{Q} \subseteq \realsN{n}$ defined by \cref{eq:LocalRegionQDefinition} is closed and defined such that the origin lies in its interior.
\end{assume}

The constraint imposed by \cref{eq:LocalRegionQDefinition} on the values that the vector $\zeta$ can take on can be interpreted as limiting the semialgebraic descriptions $\mathbf{K}_\varphi$, $\mathbf{K}_L$ of the system to a `local system model'. Adapting \cref{eq:LyapunovDecreaseCond} such that the value of $V$ is only required to decrease for all $x \in \mathcal{Q}$ leads to the constraint 
\begin{multline}
    V(\zeta) - V(\zeta^+) - \|x\|^2 \geq \sigma^{\Delta V}(\xi) + \\ \sigma^{\Delta V}_{\textrm{ineq}}(\xi) \transpose
    \begin{bmatrix}
        g(\zeta) \\
        q(\zeta) \\
        g(\zeta^+)
    \end{bmatrix} +
    p^{\Delta V}_{\textrm{eq}}(\xi)\transpose
    \begin{bmatrix}
        h(\zeta) \\
        h(\zeta^+) \\
        x^+ - f(x, u)
    \end{bmatrix},
    \label{eq:LocalLyapunovDecreaseCond}
\end{multline}
and adapted optimization problem,
\begin{subequations}
    \label{eq:SDPFormulationLocalAsymptoticStability}
    \begin{alignat}{4}
        &\span\span \text{find:} \ & \sigma^V, \, \sigma^V_{\textrm{ineq}}, \ \ \, \,
        & \!\!\! \! \! \! \! \sigma^{\Delta V},  \, \sigma^{\Delta V}_{\textrm{ineq}}, \,  p^{\Delta V}_{\textrm{eq}} \span \nonumber \\
        &\span\span \text{s.t.} \ & \eqref{eq:OrigLyapunovParam}, \ & \eqref{eq:LocalLyapunovDecreaseCond}, \span \\
        &\span\span & \sigma^V, \, \sigma^V_{\textrm{ineq}}, \sigma^{\Delta V},  \, \sigma^{\Delta V}_{\textrm{ineq}} \quad & \text{\acrshort{SOS} polynomials,} \label{eq:LocalAsymptoticStabilitySOS} \\
        &\span\span & 
        p^{\Delta V}_{\textrm{eq}} \quad & \text{arbitrary polynomials.} \label{eq:LocalAsymptoticStabilityP}
    \end{alignat}
\end{subequations}
However, contrary to the case of global stability, a solution to \acrshort{SDP} \labelcref{eq:SDPFormulationLocalAsymptoticStability} does not necessarily define a stability certificate as is illustrated in the following example.
\begin{exmp}
    \label{exmp:CounterexampleLAS}
    Consider an analysis of the local stability properties of the closed-loop system $x^+ = 2x$ over the set $\mathcal{Q} = \big\{ x \in \realsN{} \mid x^2 \leq \frac{1}{4} \big\}$. Despite the closed-loop system being trivially unstable, it will be shown that solutions to \acrshort{SDP} \labelcref{eq:SDPFormulationLocalAsymptoticStability} exist.

    One such solution can be found by considering the \acrshort{SOS} function $V(x) = \frac{1}{5}(x-2)^2(x+2)^2$, which is contained in the parameterization of \cref{eq:OrigLyapunovParam} and shown in \cref{fig:CounterexampleLASLyapunovDecrease}. As a result of the local maximum of $V$ at the equilibrium point, the adapted decrease condition of \cref{eq:LocalLyapunovDecreaseCond} is satisfied for this function $V$ over the set $\mathcal{Q}$. This can be proven by setting $\sigma^{\Delta V} = 0$, $\sigma^{\Delta V}_{\textrm{ineq}} = x^4$ and directly substituting $x^+$ according to its definition. \Cref{eq:LocalLyapunovDecreaseCond} then reads
    \begin{equation}
         V(x) - V(2x) - \|x\|^2  - x^4 \left( \frac{1}{4} - x^2 \right) \geq 0,
    \label{eq:CounterexampleLASLyapunovDecrease}
    \end{equation}
    which is satisfied for this choice of $V$ since the left-hand side of \cref{eq:CounterexampleLASLyapunovDecrease} possesses the \acrshort{SOS} decomposition
    \begin{equation}
        \left(\sqrt{\frac{95}{25}}x -\sqrt{\frac{4655}{5776}}x^3 \right)^2 + \left(\frac{1}{2}x^2\right)^2 + \left(\sqrt{\frac{1121}{5776}}x^3\right)^2.
    \end{equation}
    Thus, this example illustrates that a solution to \acrshort{SDP} \labelcref{eq:SDPFormulationLocalAsymptoticStability} is not sufficient to guarantee the system is \acrshort{LAS}. 
\end{exmp}

As a result, current implementations of this stability verification method \cite{mybibfile:Newton2022,mybibfile:Korda2017} require a second, consecutive \acrshort{SDP} to be solved. This second \acrshort{SDP} utilizes the solution to \acrshort{SDP} \labelcref{eq:SDPFormulationLocalAsymptoticStability} to determine the largest sublevel set of $V$, $\mathcal{L}_\gamma(V) \coloneq \{ x \in \realsN{n} \mid V\left(\zeta(x)\right) \leq \gamma \}$, contained in the set $\mathcal{Q}$. Requiring $q(x) \geq 0$ for all $x$ in sublevel set $\mathcal{L}_\gamma(V)$ is captured by the constraint
\begin{multline}
    \sigma^{\mathcal{Q}}_q(\zeta) q(x) \geq \sigma^{\mathcal{Q}}(\zeta) + \sigma^{\mathcal{Q}}_{\textrm{ineq}}(\zeta)\transpose  \begin{bmatrix}
        g(\zeta) \\
        \gamma - V(\zeta)
    \end{bmatrix} + \\ p^{\mathcal{Q}}_{\textrm{eq}}(\zeta)\transpose h(\zeta),
    \label{eq:SublevelSetCond}
\end{multline}
with $\sigma^{\mathcal{Q}}_q$, $\sigma^{\mathcal{Q}}$ scalar \acrshort{SOS} polynomials, $\sigma^{\mathcal{Q}}_{\textrm{ineq}}$ a vector of \acrshort{SOS} polynomials and $p^{\mathcal{Q}}_{\textrm{eq}}$ a vector of arbitrary polynomials. Note how compared to \cref{eq:LyapunovDecreaseCond,eq:LocalLyapunovDecreaseCond} an additional optimization variable $\sigma^{\mathcal{Q}}_q$ can be introduced since $q(x)$ was fixed beforehand. This leads to the optimization problem
\begin{subequations}
    \label{eq:SDPFormulationLargestSublevelSet}
    \begin{alignat}{4}
        &\span\span \text{maximize:} \ &   & \! \! \! \! \! \! \! \! \gamma  \span \nonumber \\
        &\span\span \text{s.t.} \ &  & \! \! \! \! \! \! \! \! \! \! \! \! \!   \eqref{eq:SublevelSetCond}, \span \label{eq:SDPFormulationLargestSublevelSet_SublevelSetCond} \\
        &\span\span & \sigma^{\mathcal{Q}}_q, \, \sigma^{\mathcal{Q}}, \, \sigma^{\mathcal{Q}}_{\textrm{ineq}} \quad & \text{\acrshort{SOS} polynomials,} \label{eq:test_2} \\
        &\span\span & p^{\mathcal{Q}}_{\textrm{eq}} \quad & \text{arbitrary polynomials,} \label{eq:test_3}
    \end{alignat}
\end{subequations}
which can be solved as an \acrshort{SDP} by: a) taking the last entry of $\sigma^{\mathcal{Q}}_{\textrm{ineq}}$ to be a fixed \acrshort{SOS} polynomial, or b) applying a line search method on $\gamma$. Any solution to these two consecutive optimization problems verifies that sublevel set $\mathcal{L}_\gamma(V) \subseteq \mathcal{Q}$, with sublevel set $\mathcal{L}_\gamma(V)$ an invariant set by the solution of \acrshort{SDP} \labelcref{eq:SDPFormulationLocalAsymptoticStability}. If $V\left(\zeta(0)\right) < \gamma$ and \cref{assume:ContinuityInLiftingVariables} are satisfied, the original proof of \cref{thm:GASSDPProof} certifies a) that the system is \acrshort{LAS} and b) that sublevel set $\mathcal{L}_\gamma(V)$ forms part of the system's \acrshort{ROA}.  

Summarizing, the current state-of-the-art local stability verification framework requires solving \acrshort{SDP} \labelcref{eq:SDPFormulationLocalAsymptoticStability}, to obtain $V$, and optimization problem \labelcref{eq:SDPFormulationLargestSublevelSet}, to obtain $\gamma$, after which $V\left(\zeta(0)\right) < \allowbreak \gamma$ must be verified.

\afterpage{
\begin{figure}[t]
    \centering
\resizebox{0.775\columnwidth}{!}{
\begin{tikzpicture}
    \def\distanceAbove{0.5} 
    \def\arrowSpacing{0.15} 
    \def\growthFactor{0.95} 
    \def\numPointsPerArrow{50}
    
    \def\firstArrowLength{0.425}
    
    \def\Vx(#1){1/5*((#1-2)^2)*((#1+2)^2)}
    \def\VxAbove(#1){1/5*((#1-2)^2)*((#1+2)^2)+0.35}

    \draw[step=1cm, gray!30, very thin] (-3.5, -0.5) grid (3.5, 5.5);

    \draw[thick, gray!60] (-0.5, -0.5) -- (-0.5, 5.5);
    \draw[thick, gray!60] (0.5, -0.5) -- (0.5, 5.5);

    \draw[thick, ->] (-3.5, 0) -- (3.5, 0) node[right] {$x$};
    \draw[thick, ->] (0, -0.5) -- (0, 5.5) node[right, xshift=20mm, yshift=-10.1mm, text=blue] {$V(x)$};

    \draw[domain=-3:3, smooth, thick, samples=200, color=blue] 
        plot (\x, {\Vx(\x)});

    \pgfmathsetmacro{\arrowLength}{\firstArrowLength} 
    \pgfmathsetmacro{\startPos}{0}

    \foreach \i in {0,...,2} {                                             
        \pgfmathsetmacro{\startPos}{\i*\arrowSpacing + \firstArrowLength * (\growthFactor^(\i)-1)/(\growthFactor-1)}
        \pgfmathsetmacro{\endPos}{\startPos + \firstArrowLength*(\growthFactor)^(\i)}
        
        \draw[->, thick, black, domain=\startPos:\endPos, samples=\numPointsPerArrow] 
            plot [smooth] 
            (\x, {\VxAbove(\x)});

        \draw[->, thick, black, domain=-\startPos:-\endPos, samples=\numPointsPerArrow] 
            plot [smooth] 
            (\x, {\VxAbove(\x)});
    }

    \fill[gray!60, opacity=0.2] (-0.5, -0.5) rectangle (0.5, 5.5); 

    \node at (0.725, 4.3)[text = gray!60] {$\mathcal{Q}$};

    \foreach \x in {-3, -2, -1, 1, 2, 3} {
        \draw (\x, 0.1) -- (\x, -0.1) node[below, fill=white] {\x};
    }
    \foreach \y in {1, 2, 4, 5} {
        \draw (-0.1, \y) -- (0.1, \y) node[left, xshift=-2mm] {\y};
    }
\end{tikzpicture}
}
    \caption{Plot of the \acrshort{SOS} function $V(x) = \frac{1}{5}(x-2)^2(x+2)^2$ and the set $\mathcal{Q} = \big\{ x \in \realsN{} \mid x^2 \leq \frac{1}{4} \big\}$. Arrows indicate the change in function value for the closed-loop system $x^+ = 2x$.}
    \label{fig:CounterexampleLASLyapunovDecrease}
\end{figure}
}

\section{Modeling Contributions}
\label{sec:ModelingContributions}
This section outlines two extensions to the class of neural networks that the previously described \acrshort{SOS} stability verification framework can be applied to. \Cref{sec:NewActivationFunctions} details novel, smooth, semialgebraic activation functions that mimic the commonly used activation functions $\text{softplus}$ and $\text{tanh}$. \Cref{sec:NewNNArchitectures} drops the assumption that $\varphi$ is an $\ell$-layer feedforward network described by \cref{eq:OverallFFNeuralNet,eq:MatrixMultiplicationNeuralNet}, and shows that the stability verification framework can still be applied if the controller is part of the more general class of \acrfullpl{REN}.

\subsection{Expanded Set of Semialgebraic Activation Functions}
\label{sec:NewActivationFunctions}
As has been shown previously \cite{mybibfile:Korda2022}, piecewise affine activation functions \acrshort{ReLU} and the saturation function are semialgebraic, enabling an exact semialgebraic set description of complete feedforward networks using such activation functions to be obtained. In order to examine properties of neural networks using $\textrm{softplus}(x) = \ln(1+e^{x})$, $\tanh(x) = \frac{e^x-e^{-x}}{e^x + e^{-x}}$ or similar smooth activation functions, previous works have utilized sector constraints and interval bound propagation to approximate the graphs of these activation functions \cite{mybibfile:Yin2022,mybibfile:Newton2021}. 

In order to extend the existing \acrshort{SOS} stability verification framework to a larger class of neural networks, we propose an alternative approach. Consider an approximation of the $\textrm{softplus}$ activation function,
\begin{equation}
    \label{eq:softplusSAapprox}
    \hat{\lambda}_{\textrm{sp}}(x) = \frac{x}{2} + \sqrt{c_{\textrm{sp}} + \Big(\frac{x}{2}\Big)^2} 
\end{equation}
for $c_{\textrm{sp}} > 0$. The approximation $\hat{\lambda}_{\textrm{sp}}$ shares several key properties with $\textrm{softplus}$: both are non-negative, smooth, monotonically increasing, unbounded functions with asymptotes at $0$ and $x$.
Likewise, consider an approximation of $\textrm{tanh}$ as
\begin{equation}
    \label{eq:tanhSAapprox}
    \hat{\lambda}_{\textrm{tanh}}(x) = \frac{c_{\textrm{tanh}}x}{\sqrt{1 + (c_{\textrm{tanh}}x)^2}},
\end{equation}
for $c_{\textrm{tanh}} > 0$. Both $\hat{\lambda}_{\textrm{tanh}}$ and $\textrm{tanh}$ are odd, smooth, monotonically increasing, bounded functions with asymptotes at $-1$ and $1$.
In addition, for all $x \in \realsN{}$, the graphs $(x, \hat{\lambda}_{\textrm{sp}}(x))$ and $(x, \hat{\lambda}_{\textrm{tanh}}(x))$ are described by
the semialgebraic sets
\begin{gather}
    \left\{ (x, y) \in \realsN{2} \, \left\vert \:
    \begin{alignedat}{1}
        & y \geq 0, \ y \geq x \\
        & y \left( y - x\right) = c_{\textrm{sp}}
    \end{alignedat}
    \right. \right\}, \\
    \left\{ (x, y) \in \realsN{2} \, \left\vert \:
        \begin{alignedat}{1}
            & x \, y \geq 0 \\
            & y^2 \left( 1 + (c_{\textrm{tanh}}x)^2 \right) = (c_{\textrm{tanh}}x)^2
        \end{alignedat}
        \right. \right\},
\end{gather}
respectively. The functions $\hat{\lambda}_{\textrm{sp}}$ and $\hat{\lambda}_{\textrm{tanh}}$, shown in \cref{fig:ApproximateActivationFunctions}, thus represent semialgebraic activation functions with the same fundamental properties as $\text{softplus}$ and $\text{tanh}$, respectively, making them uniquely suited to replace $\text{softplus}$ and $\text{tanh}$ in neural-network-based controllers whose stability must be verified or guaranteed.

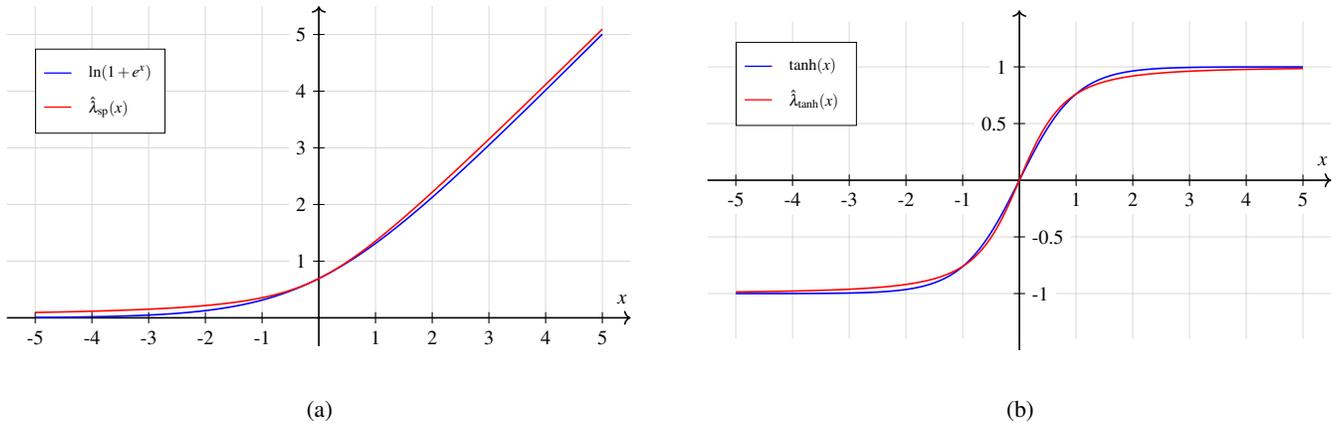
\begin{figure*}[t]
    \centering
    \begin{subfigure}{0.47\textwidth} 
        \centering
\resizebox{\linewidth}{!}{
\begin{tikzpicture}
    \draw[step=1cm, gray!30, very thin] (-5.5, -0.5) grid (5.5, 5.5);
    
    \draw[thick, ->] (-5.5, 0) -- (5.5, 0) node[above=0.125cm, xshift=-0.15cm] {$x$};
    \draw[thick, ->] (0, -0.5) -- (0, 5.5); 
    
    \draw[domain=-5:5, smooth, thick, samples=100, color=blue] 
        plot (\x, {ln(1 + exp(\x))}); 

    \draw[domain=-5:5, smooth, thick, samples=100, color=red] 
        plot (\x, {\x/2 + sqrt(0.480453 + (\x)^2/4)}); 

    \foreach \x in {-5, -4, -3, -2, -1, 1, 2, 3, 4, 5} 
        \draw (\x, 0.1) -- (\x, -0.1) node[below, fill=white] {\x};
    \foreach \y in {1, 2, 3, 4, 5} 
        \draw (0.1, \y) -- (-0.1, \y) node[left, fill=white] {\y};

    \begin{scope}[shift={(-5, 4)}] 
        \node[draw, fill=white, inner sep=5pt, anchor=west, scale=0.8] at (0, 0) { \begin{minipage}{2.5cm} 
            \begin{tikzpicture}[baseline]
                \draw[blue, thick] (0, 0) -- (0.6, 0);
                \node[anchor=west] at (0.8, 0) {$\ln(1 + e^x)$}; 
            \end{tikzpicture}
    
            \begin{tikzpicture}[baseline]
                \draw[red, thick] (0, 0) -- (0.6, 0); 
                \node[anchor=west] at (0.8, 0) {$\hat{\lambda}_{\textrm{sp}}(x)$}; 
            \end{tikzpicture}
        \end{minipage}
        };

    \end{scope}
\end{tikzpicture}
}
        \caption{}
        \label{subfig:softplus_approximation}
    \end{subfigure}
    \hspace{0.03\textwidth}
    \begin{subfigure}{0.47\textwidth} 
      \centering
\resizebox{\linewidth}{!}{
\begin{tikzpicture}
    \draw[step=1cm, gray!30, very thin] (-5.5, -2.8) grid (5.5, 2.8);
    
    \draw[thick, ->] (-5.5, 0) -- (5.5, 0) node[above=0.125cm, xshift=-0.15cm] {$x$};
    \draw[thick, ->] (0, -3) -- (0, 3);
    
    \draw[domain=-5:5, smooth, thick, samples=100, color=blue] 
        plot (\x, {2*(exp(\x) - exp(-\x))/((exp(\x) + exp(-\x))}); 

    \draw[domain=-5:5, smooth, thick, samples=100, color=red] 
        plot (\x, {2*(1.171*\x) / (sqrt(1 + (1.171*\x)^2))}); 

    \foreach \x in {-5, -4, -3, -2, -1, 1, 2, 3, 4, 5} 
        \draw (\x, 0.1) -- (\x, -0.1) node[below, fill=white] {\x};
    \foreach \y in {-1, -0.5} 
        \draw (-0.1, {2*\y}) -- (0.1, {2*\y}) node[right, fill=white] {\y};
    \foreach \y in {0.5, 1} 
        \draw (0.1, {2*\y}) -- (-0.1, {2*\y}) node[left, fill=white] {\y};

    \begin{scope}[shift={(-5, 1.7)}] =
        \node[draw, fill=white, inner sep=5pt, anchor=west, scale=0.8] at (0, 0) { \begin{minipage}{2.3cm} 
            
            \begin{tikzpicture}[baseline]
                \draw[blue, thick] (0, 0) -- (0.6, 0); 
                \node[anchor=west] at (0.8, 0) {$\textrm{tanh}(x)$}; 
            \end{tikzpicture}

            \begin{tikzpicture}[baseline]
                \draw[red, thick] (0, 0) -- (0.6, 0); 
                \node[anchor=west] at (0.8, 0) {$\hat{\lambda}_{\textrm{tanh}}(x)$};
            \end{tikzpicture}
        \end{minipage}
        };

    \end{scope}
\end{tikzpicture}
}
      \caption{}
      \label{subfig:tanh_approximation}
    \end{subfigure}
    \caption{Comparison of \subref{subfig:softplus_approximation} $\text{softplus}(x)$ and the semialgebraic function $\hat{\lambda}_{\textrm{sp}}(x)$ over the interval $[-5, 5]$ for $c_{\textrm{sp}} = \ln(2)^2 \approx 0.480$ and \subref{subfig:tanh_approximation} $\text{tanh}(x)$ and the semialgebraic function $\hat{\lambda}_{\textrm{tanh}}(x)$ over the interval $[-5, 5]$ for $c_{\textrm{tanh}} = 1.171$.}
    \centering
    \label{fig:ApproximateActivationFunctions}
\end{figure*}

In addition to their direct use in newly synthesized neural-network-based controllers, pre-existing neural-network-based controller using $\text{softplus}$ and/or $\text{tanh}$ activation functions may be analyzed using $\hat{\lambda}_{\textrm{sp}}$ and $\hat{\lambda}_{\textrm{tanh}}$ via approximate descriptions of their graphs. For example, for all $x \in \realsN{}$, $\left( x, \text{tanh}(x) \right) \in$
\begin{equation}
    \left\{ (x, y) \in \realsN{2} \, \left\vert \
    \begin{alignedat}{1}
        & x \, (y-\delta) \geq 0, \ \delta^2 \leq \Delta \\
        & (y - \delta)^2 \left( 1 + (c_{\textrm{tanh}}x)^2 \right) = (c_{\textrm{tanh}}x)^2
    \end{alignedat}
    \right. \right\}
\end{equation}
for a suitably chosen $\Delta > 0$. Compared to the aforementioned sector constraints, the proposed approach has the advantage of not requiring bounds to be propagated between layers, thereby simplifying the process of setting up the relevant constraints. A closer examination of the approximative use of these functions will be the subject of a future publication.

\begin{rem}
    Variants of these functions can also be used to approximate other activation functions, e.g. the logistic function or $\text{arctan}$.
\end{rem}

\subsection{Expanded Class of Neural-Network Architectures}
\label{sec:NewNNArchitectures}
A second addition to expand the modeling framework is found by considering so-called \acrfullpl{REN} as neural-network-based controllers. Networks of this type possess a state space representation with internal state variable $x_{\varphi}$, input $x$ and output $u$ described by,
\begin{subequations}
    \label{eq:RENdescription}
    \begin{gather}
        \label{eq:RENSSdescription}
        \begin{bNiceArray}{c}
            x_{\varphi}^+ \\ 
            v_{\varphi} \\
            u            
        \end{bNiceArray}
        =
        \begin{bNiceArray}{c|c@{\hskip 5pt}c}
            A & B_{1} & B_{2} \\[1pt] 
            C_{1} & D_{11} & D_{12} \\
            C_{2} & D_{21} & D_{22} 
            \CodeAfter
            \tikz \draw [transform canvas={yshift=1pt}, shorten > = 0.35em, shorten < = 0.35em](2-|1) -- (2-|last) {};
        \end{bNiceArray}
        \begin{bNiceArray}{c}
            x_{\varphi} \\ 
            w_{\varphi} \\
            x            
        \end{bNiceArray}
        +
        \begin{bNiceArray}{c}
            b_{x_\varphi} \\  
            b_{v_\varphi} \\
            b_{u_\varphi}
        \end{bNiceArray},
        \\
        \label{eq:RENActivationFunction}
        w_{\varphi,i} = \phi_i(v_{\varphi,i}) \quad \forall i \in n_\varphi,
    \end{gather}
\end{subequations}
where $n_\varphi$ represents the number of neurons in the network. \Cref{eq:RENdescription} captures a wide class of neural networks, including the previously analyzed deep feedforward networks, recurrent networks such as \acrfull{LSTM} networks, convolutional networks, and more \cite{mybibfile:Revay2024}. 
\acrshortpl{REN} are most naturally presented via a fractional transformation with an \acrshort{LTI} system as shown in the dotted box of \cref{subfig:PreRENTransformation}. In this form it is clear to see how a general matrix $D_{11}$ in \cref{eq:RENdescription} can define an implicit or cyclic system of equations. To adhere to the assumptions imposed on closed-loop system \labelcref{eq:GeneralClosedLoopSystem}, it is assumed all \acrshortpl{REN} are Lipschitz continuous and well-posed, i.e. for every set of inputs there exists a unique output \cite{mybibfile:Revay2020_EquilibriumNet}. 

To show \acrshort{REN} neural-network-based controllers are compatible with the \acrshort{SOS} stability verification framework, consider the general form shown in \cref{subfig:PreRENTransformation}. By defining an augmented state, $\tilde{x}\transpose = [x\transpose, \, x_{\varphi}\transpose]$, an equivalent representation of the closed-loop system similar to that of \cref{fig:LFTPlantNeuralNetwork} can be obtained, shown in \cref{subfig:PostRENTransformation}. The neural network mapping $\tilde{x}$ to $w_\varphi$ is now potentially implicit and described by
\begin{equation}
    \label{eq:ImplicitNN}
    w_{\varphi,i} = \phi_i\left(\begin{bNiceArray}{c@{\hskip 4pt}c} D_{12,[i,:]} & C_{1,[i,:]} \end{bNiceArray} \tilde{x} + D_{11,[i,:]} w_\varphi + b_{v_\varphi,i} \right),
\end{equation}
for all $i \in [n_\varphi]$. Well-posed, implicit networks described by \cref{eq:ImplicitNN} have a representation in the form of \cref{eq:GraphPolynomialRepresentation} if the graphs of all activation functions $\phi_i$ can be overapproximated using semialgebraic sets.
\begin{thm}[Set Representation of Implicit Networks]
    \label{thm:RENSemialgebraicSet}
    Assume a neural network described by \cref{eq:ImplicitNN} is implicit and well-posed, and the graphs of all its activation functions $\phi_i$ can be represented as a polynomially-constrained set via the use of lifting variables. Then, solutions $\left(\tilde{x}, w_\varphi(\tilde{x})\right)$ to this network can be expressed via a set as in \cref{eq:GraphPolynomialRepresentation}.
\end{thm}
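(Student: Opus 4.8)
The plan is to build the polynomial data $g$, $h$ and the lifting vector $\lambda$ of \cref{eq:GraphPolynomialRepresentation} explicitly, by composing the affine pre-activation maps of \cref{eq:ImplicitNN} with the assumed graph representations of the individual activations, and then to invoke well-posedness to certify that the set obtained is exactly the graph $\big(\tilde{x}, w_\varphi(\tilde{x})\big)$.

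First I would exploit the hypothesis on the activations: for each neuron $i \in [n_\varphi]$ there exist polynomials $g_i$, $h_i$ and a lifting vector $\mu_i$ such that the scalar graph $\big(v, \phi_i(v)\big)$ equals $\{(v,w) \mid \exists \mu_i \ \text{s.t.}\ g_i(v,\mu_i,w) \geq 0,\ h_i(v,\mu_i,w) = 0\}$. Next I would note that the pre-activation $v_{\varphi,i}$ in \cref{eq:ImplicitNN} is an affine---hence polynomial---function $a_i$ of the stacked vector $[\tilde{x}\transpose,\, w_\varphi\transpose]\transpose$. Substituting $v = a_i(\tilde{x}, w_\varphi)$ and $w = w_{\varphi,i}$ into $g_i$ and $h_i$, and using that compositions of polynomials are polynomial, I obtain polynomial constraints in the variables $(\tilde{x}, w_\varphi, \mu_i)$.

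I would then stack these constraints over all $i \in [n_\varphi]$: collecting $\lambda = [\mu_1\transpose, \ldots, \mu_{n_\varphi}\transpose]\transpose$ (augmented, if convenient, by the pre-activations $v_\varphi$ themselves as additional lifting variables tied to $\tilde{x}$ and $w_\varphi$ through equality constraints) and vertically concatenating the per-neuron inequality and equality polynomials into vector-valued $g$ and $h$. This yields a candidate set $S$ of the exact form \cref{eq:GraphPolynomialRepresentation} with input $\tilde{x}$ and output $w_\varphi$. By exactness of each activation's graph representation, a pair $(\tilde{x}, w_\varphi)$ lies in $S$ if and only if there exist witnesses $\mu_i$ certifying $w_{\varphi,i} = \phi_i\big(a_i(\tilde{x}, w_\varphi)\big)$ for every $i$---which is precisely the implicit fixed-point system \cref{eq:ImplicitNN}. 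Finally, well-posedness guarantees that for every $\tilde{x}$ this system admits a unique solution $w_\varphi(\tilde{x})$, so $S$ coincides with the graph $\big(\tilde{x}, w_\varphi(\tilde{x})\big)$, completing the representation.

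The main obstacle is the cyclic, implicit coupling: $w_\varphi$ appears simultaneously inside the activation arguments (through the pre-activations $v_\varphi$) and as their collective output. The construction sidesteps this by leaving $w_\varphi$ as the free output variable and encoding the fixed-point relation purely as the polynomial relations $g$, $h$, so that the implicit map is never solved in closed form. Well-posedness is exactly what upgrades the resulting relation from a possibly multivalued or empty set into the single-valued graph of a function; I would flag that if the activation graphs are merely over-approximated rather than represented exactly, the identical construction produces a set that over-approximates the graph, which still suffices for the downstream \acrshort{SOS} constraints but forfeits exact equality.
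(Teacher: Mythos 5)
Your construction is essentially identical to the paper's proof: substitute the affine pre-activations of \cref{eq:ImplicitNN} into the assumed polynomially-constrained graph representations of each $\phi_i$, stack the resulting inequality and equality polynomials over all $i \in [n_\varphi]$, and collect the per-neuron lifting variables into a single vector $\lambda$, yielding a set of the form \cref{eq:GraphPolynomialRepresentation}. You are in fact slightly more thorough than the paper, which only remarks that every solution of \cref{eq:ImplicitNN} lies in the constructed set; your explicit reverse inclusion (exactness of the activation graphs plus well-posedness giving single-valuedness) makes precise the equality that the paper leaves implicit.
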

\begin{proof}
    By assumption, it holds for all $i \in [n_\varphi]$, $x \in \realsN{}$,
    \begin{multline}
        \left(x, \phi_i(x)\right) = \big\{ (x, y) \in \realsN{2} \mid \exists \lambda_i \in \realsN{n_{\lambda_i}} \ \text{s.t.} \  \\
        g_i(x, \lambda_i,  y) \geq 0, \ h_i(x, \lambda_i,  y) = 0 \big\}.
    \end{multline}
    Then, denoting 
    \begin{equation}
        v_{\varphi,i} = \begin{bNiceArray}{c@{\hskip 4pt}c} D_{12,[i,:]} & C_{1,[i,:]} \end{bNiceArray} \tilde{x} + D_{11,[i,:]} u + b_{v_\varphi,i},
    \end{equation}
    by construction, any solution of \cref{eq:ImplicitNN}, $\left(\tilde{x}, w_\varphi(x)\right)$, lies in the set
    \begin{multline}
        \label{eq:ImplicitNetworkSolutionGeneralSolutionSet}
        \! \! \! \! \! \! \! \! \! \!  \left\{ (\tilde{x}, u) \, \left\vert \: \exists
        \begin{bNiceArray}{c}
            \lambda_1 \\ 
            \vdots \\
            \lambda_{n_\varphi}
        \end{bNiceArray} \! \!
        \in \realsN{\sum_{i=1}^{n_\varphi} n_{\lambda_i}} \ \text{s.t.} \! \right. \right.
        \begin{bmatrix}
            g_1\left(v_{\varphi,1}, \lambda_1,  u_1\right) \\ 
            \vdots \\
            g_{n_\varphi}\left(v_{\varphi,n_\varphi}, \lambda_{n_\varphi},  u_{n_\varphi}\right)
        \end{bmatrix} \geq 0, \\
        \left. \begin{bmatrix}
            h_1\left(v_{\varphi,1}, \lambda_1,  u_1\right) \\ 
            \vdots \\
            h_{n_\varphi}\left(v_{\varphi,n_\varphi}, \lambda_{n_\varphi},  u_{n_\varphi}\right)
        \end{bmatrix} = 0,
        \right\},
    \end{multline}
    which is a polynomially-constrained set and matches the form of \cref{eq:GraphPolynomialRepresentation}.
\end{proof}
\begin{cor}
    Assume a neural network described by \cref{eq:ImplicitNN} is implicit and well-posed, and all activation functions $\phi_i$ are semiaglebraic. Then, solutions $\left(\tilde{x}, w_\varphi(\tilde{x})\right)$ to this network form a semialgebraic set.
\end{cor}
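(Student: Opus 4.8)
The plan is to deduce this corollary from \cref{thm:RENSemialgebraicSet} by verifying its hypothesis for semialgebraic activations and then removing the existential quantifier over the lifting variables. First I would recall that, by definition, an activation $\phi_i$ is semialgebraic precisely when its graph $(x, \phi_i(x))$ is a semialgebraic set, i.e.\ a finite Boolean combination of sets $\{p \geq 0\}$ and $\{p = 0\}$ for polynomials $p$. The hypothesis of \cref{thm:RENSemialgebraicSet} asks for something slightly more rigid, namely a representation as a single conjunctive system $g_i \geq 0$, $h_i = 0$ with lifting variables. To bridge this gap I would invoke the standard encodings that bring any semialgebraic set into this template: a strict inequality $\{p > 0\}$ is rewritten as $\{\exists \lambda : \lambda^2 p - 1 = 0\}$, and a finite union $\{p_1 \geq 0\} \cup \{p_2 \geq 0\}$ as $\{\exists \lambda : \lambda \geq 0, \ 1 - \lambda \geq 0, \ \lambda p_1 + (1-\lambda) p_2 \geq 0\}$, with the obvious inductive extension to arbitrary Boolean combinations. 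Consequently every semialgebraic graph admits a lifted, conjunctive polynomially-constrained representation, so the hypothesis of \cref{thm:RENSemialgebraicSet} is met and the theorem applies verbatim.

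Applying the theorem then yields that the solution set $(\tilde{x}, w_\varphi(\tilde{x}))$ coincides with the set in \cref{eq:ImplicitNetworkSolutionGeneralSolutionSet}, which has the shape $\{(\tilde{x}, u) : \exists \lambda \in \realsN{m}, \ g(\tilde{x}, u, \lambda) \geq 0, \ h(\tilde{x}, u, \lambda) = 0\}$ for polynomial vectors $g, h$, where the affine substitutions defining each $v_{\varphi,i}$ keep every constraint polynomial in $(\tilde{x}, u, \lambda)$. The crucial observation is that this set is exactly the image of the basic semialgebraic set $S = \{(\tilde{x}, u, \lambda) : g \geq 0, \ h = 0\}$ under the coordinate projection $(\tilde{x}, u, \lambda) \mapsto (\tilde{x}, u)$. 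To finish I would invoke the Tarski--Seidenberg theorem, i.e.\ that the projection of a semialgebraic set is again semialgebraic (equivalently, real closed fields admit quantifier elimination, so the $\exists \lambda$ can be eliminated). Hence the projection of $S$, and therefore the solution set of the network, is semialgebraic, which is the claim.

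The step I expect to be the main obstacle is not the mechanical application of the theorem but this quantifier elimination: the representation produced by \cref{thm:RENSemialgebraicSet} only exhibits the solution set as a \emph{projection} of a semialgebraic set, and semialgebraicity is not preserved under projection for arbitrary sets --- it is precisely Tarski--Seidenberg that guarantees closure under projection here. A secondary, routine but easy-to-overlook point is the reduction in the first paragraph: one must check that the a priori weaker ``semialgebraic'' assumption still fits the conjunction-style $g \geq 0$, $h = 0$ template demanded by the theorem, which is exactly what the strict-inequality and union encodings supply.

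As a sanity check on the quantifier-elimination argument one could bypass the theorem entirely: the solution set is the finite intersection over $i$ of the preimages of $\mathrm{graph}(\phi_i)$ under the affine maps $(\tilde{x}, u) \mapsto (v_{\varphi,i}, u_i)$, and semialgebraic sets are closed under polynomial substitution (elementary) and finite intersection; this reproves the corollary without appeal to \cref{eq:ImplicitNN}'s lifted form and confirms the conclusion.
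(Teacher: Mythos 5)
Your proposal is correct, but it takes a noticeably heavier route than the paper. In the paper's framework, a ``semialgebraic'' activation function is one whose graph admits a \emph{conjunctive} description $\left(x, \phi_i(x)\right) = \{(x,y) \mid g_i(x,y) \geq 0, \ h_i(x,y) = 0\}$ with no lifting variables (this is the convention used throughout, e.g.\ for ReLU in \cref{eq:SingleReLUSemialgebraicSet} and for $\hat{\lambda}_{\textrm{sp}}$, $\hat{\lambda}_{\textrm{tanh}}$), and the paper's proof is then a one-liner: substitute the affine expressions $v_{\varphi,i}(\tilde{x},u)$ into each $g_i, h_i$ and stack the constraints, producing a set analogous to \cref{eq:ImplicitNetworkSolutionGeneralSolutionSet} but \emph{without} lifting variables. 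That set is basic semialgebraic in $(\tilde{x},u)$ directly --- there is no existential quantifier, hence no projection, hence no need for Tarski--Seidenberg. Your main route instead starts from the general Boolean-combination definition of semialgebraicity, encodes strict inequalities and unions via lifting variables to fit the hypothesis of \cref{thm:RENSemialgebraicSet}, and then must appeal to quantifier elimination to remove the very lifting variables you introduced; this is sound and buys genuine generality (it covers activations whose graphs are not basic semialgebraic, e.g.\ ones needing a union of branches), but it converts a substitution argument into one resting on a deep theorem. Notably, your closing ``sanity check'' --- the solution set is the finite intersection of preimages of the graphs under the affine maps $(\tilde{x},u) \mapsto (v_{\varphi,i}, u_i)$, and semialgebraic sets are closed under polynomial substitution and finite intersection --- \emph{is} essentially the paper's proof, and it also better serves the downstream purpose of the corollary: the SOS machinery needs explicit constraints of the form $g \geq 0$, $h = 0$ to attach multipliers to, which the direct construction provides and which a bare appeal to Tarski--Seidenberg does not.
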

\begin{proof}
    By assumption, it holds for all $i \in [n_\varphi]$, $x \in \realsN{}$,
    \begin{equation}
        \left(x, \phi_i(x)\right) = \big\{ (x,y) \in \realsN{2} \mid g_i(x, y) \geq 0, \,
        h_i(x, y) = 0 \big\}.
    \end{equation}
    Construction of a set analagous to \cref{eq:ImplicitNetworkSolutionGeneralSolutionSet} (without lifting variables) completes the proof.
\end{proof}

Thus, by applying the transformation shown in \cref{fig:RENTransformation} and following the construction of \cref{thm:RENSemialgebraicSet}, a `system model' $\mathbf{K}_\varphi$, $\mathbf{K}_L$ with $\tilde{x}$ as the state can be set up for a closed-loop system with a \acrshort{REN} as a neural-network-based controller. After a possible shift in coordinates to ensure the origin is an equilibrium, the stability verification method of \cref{sec:StateOfTheArt} can be applied, showing that the \acrshort{SOS} stability verification method is directly applicable to a large class of neural-network-based controllers. It should be noted that the newly introduced activation functions, $\hat{\lambda}_{\textrm{sp}}$ and $\hat{\lambda}_{\textrm{tanh}}$, may also be used in the formulation of a \acrshort{REN}.

\begin{figure*}[t]
    \centering
    \begin{subfigure}{0.4\textwidth} 
        \centering
\begingroup%
  \makeatletter%
  \providecommand\color[2][]{%
    \errmessage{(Inkscape) Color is used for the text in Inkscape, but the package 'color.sty' is not loaded}%
    \renewcommand\color[2][]{}%
  }%
  \providecommand\transparent[1]{%
    \errmessage{(Inkscape) Transparency is used (non-zero) for the text in Inkscape, but the package 'transparent.sty' is not loaded}%
    \renewcommand\transparent[1]{}%
  }%
  \providecommand\rotatebox[2]{#2}%
  \newcommand*\fsize{\dimexpr\f@size pt\relax}%
  \newcommand*\lineheight[1]{\fontsize{\fsize}{#1\fsize}\selectfont}%
  \ifx\svgwidth\undefined%
    \setlength{\unitlength}{195.70561451bp}%
    \ifx\svgscale\undefined%
      \relax%
    \else%
      \setlength{\unitlength}{\unitlength * \real{\svgscale}}%
    \fi%
  \else%
    \setlength{\unitlength}{\svgwidth}%
  \fi%
  \global\let\svgwidth\undefined%
  \global\let\svgscale\undefined%
  \makeatother%
  \begin{picture}(1,0.63327518)%
    \lineheight{1}%
    \setlength\tabcolsep{0pt}%
    \put(0,0){\includegraphics[width=\unitlength,page=1]{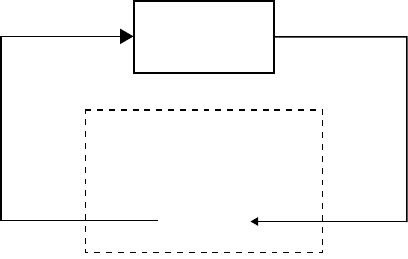}}%
    \put(0.38344241,0.53285429){\makebox(0,0)[lt]{\lineheight{1.25}\smash{\begin{tabular}[t]{l}$x^+ = f(x,u)$\end{tabular}}}}%
    \put(0.05577581,0.12232683){\makebox(0,0)[lt]{\lineheight{1.25}\smash{\begin{tabular}[t]{l}$u(x)$\end{tabular}}}}%
    \put(0.7519792,0.56566617){\makebox(0,0)[lt]{\lineheight{1.25}\smash{\begin{tabular}[t]{l}$x$\end{tabular}}}}%
    \put(0,0){\includegraphics[width=\unitlength,page=2]{RENTransformation_Before.pdf}}%
    \put(0.48412852,0.24785409){\makebox(0,0)[lt]{\lineheight{1.25}\smash{\begin{tabular}[t]{l}$\phi$\end{tabular}}}}%
    \put(0.47730358,0.10231663){\makebox(0,0)[lt]{\lineheight{1.25}\smash{\begin{tabular}[t]{l}{\large $G$}\end{tabular}}}}%
    \put(0.23367888,0.19787366){\makebox(0,0)[lt]{\lineheight{1.25}\smash{\begin{tabular}[t]{l}$v_\varphi$\end{tabular}}}}%
    \put(0.71374434,0.19787366){\makebox(0,0)[lt]{\lineheight{1.25}\smash{\begin{tabular}[t]{l}$w_\varphi$\end{tabular}}}}%
  \end{picture}%
\endgroup%

        \caption{}
        \label{subfig:PreRENTransformation}
    \end{subfigure}
    \hspace{0.1\textwidth}
    \begin{subfigure}{0.4\textwidth} 
      \centering
\begingroup%
  \makeatletter%
  \providecommand\color[2][]{%
    \errmessage{(Inkscape) Color is used for the text in Inkscape, but the package 'color.sty' is not loaded}%
    \renewcommand\color[2][]{}%
  }%
  \providecommand\transparent[1]{%
    \errmessage{(Inkscape) Transparency is used (non-zero) for the text in Inkscape, but the package 'transparent.sty' is not loaded}%
    \renewcommand\transparent[1]{}%
  }%
  \providecommand\rotatebox[2]{#2}%
  \newcommand*\fsize{\dimexpr\f@size pt\relax}%
  \newcommand*\lineheight[1]{\fontsize{\fsize}{#1\fsize}\selectfont}%
  \ifx\svgwidth\undefined%
    \setlength{\unitlength}{195.70527435bp}%
    \ifx\svgscale\undefined%
      \relax%
    \else%
      \setlength{\unitlength}{\unitlength * \real{\svgscale}}%
    \fi%
  \else%
    \setlength{\unitlength}{\svgwidth}%
  \fi%
  \global\let\svgwidth\undefined%
  \global\let\svgscale\undefined%
  \makeatother%
  \begin{picture}(1,0.63338911)%
    \lineheight{1}%
    \setlength\tabcolsep{0pt}%
    \put(0,0){\includegraphics[width=\unitlength,page=1]{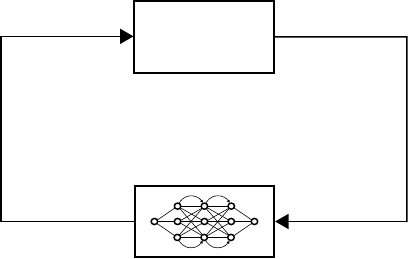}}%
    \put(0.36307978,0.5327426){\makebox(0,0)[lt]{\lineheight{1.25}\smash{\begin{tabular}[t]{l}$\tilde{x}^+ = \tilde{f}(\tilde{x},w_\varphi)$\end{tabular}}}}%
    \put(0.05376934,0.12024305){\makebox(0,0)[lt]{\lineheight{1.25}\smash{\begin{tabular}[t]{l}$w_\varphi(\tilde{x})$\end{tabular}}}}%
    \put(0.75198121,0.56578001){\makebox(0,0)[lt]{\lineheight{1.25}\smash{\begin{tabular}[t]{l}$\tilde{x}$\end{tabular}}}}%
  \end{picture}%
\endgroup%

      \caption{}
      \label{subfig:PostRENTransformation}
    \end{subfigure}
    \caption{Block diagram transformation allowing the analysis of \acrshort{REN} neural-network-based controller using the \acrshort{SOS} stability verification framework. \subref{subfig:PreRENTransformation} Open-loop system $x^+ = f(x,u)$ in closed-loop with a \acrshort{REN} neural-network-based controller shown via fractional transformation in the dotted box. \subref{subfig:PostRENTransformation} Equivalent system obtained by augmenting the state of the dynamical system with the \acrshort{REN}'s internal state variable $x_\varphi$.}
    \centering
    \label{fig:RENTransformation}
\end{figure*}

\section{Contributions to Local Stability Verification}
\label{sec:StabilityVerificationContributions}
As highlighted in \cref{exmp:CounterexampleLAS}, the current local stability verification procedure of \cref{sec:OrigLocalStability} requires solving two optimization problems consecutively. Importantly, this procedure possesses two significant drawbacks:
\begin{enumerate}
\item Assuming the system has been proven to be locally stable over a(n initial) set $\mathcal{Q}$, e.g. $\{ x \in \realsN{n} \mid \|x\|^2 \leq \delta \}$, there exists no systematic method to update the set $\mathcal{Q}$ and certify a larger set as forming part of the \acrshort{ROA}. Therefore, in practice, updating the set $\mathcal{Q}$ in an attempt to certify a larger region as forming part of the \acrshort{ROA} relies on prior system knowledge, heuristics and/or a brute force approach.

\item Due to the splitting of the search for the stability certificate into two separate optimization problems, the function $V$ determined by the solution of \acrshort{SDP} \labelcref{eq:SDPFormulationLocalAsymptoticStability} is not optimized to maximize the volume of the sublevel set $\mathcal{L}_\gamma(V)$ lying strictly inside the chosen set $\mathcal{Q}$, which was found by solving optimization problem \labelcref{eq:SDPFormulationLargestSublevelSet}. 
\end{enumerate}
As a result, it is not guaranteed that increasing the size of the set $\mathcal{Q}$ will result in an increase in the set certified to form part of the \acrshort{ROA} or, as in \cref{exmp:CounterexampleLAS}, that a solution to \acrshort{SDP} \labelcref{eq:SDPFormulationLocalAsymptoticStability} defines a sublevel set lying inside $\mathcal{Q}$. 

In the remainder of this Section, methods are presented to overcome these drawbacks. \Cref{sec:ImprovedStabilityCertificateAnalysis} presents an alternate stability proof, which will form the basis for the improvements to the local stability verification framework presented in \cref{sec:ReducedParamV,sec:BarrierFunc}.

\subsection{Alternate Stability Proof}
\label{sec:ImprovedStabilityCertificateAnalysis}
First, an alternate proof to verify the validity of the stability certificates returned by the various optimization problems presented thusfar in this paper is presented. This relies on the following lemma, which outlines a more specific set of conditions under which it can be guaranteed that the function $V$ defined by a solution to an \acrshort{SDP} is a valid (candidate) Lyapunov function.

\begin{lemma}[Strict Minimum at Origin]
    \label{lemma:MinimumLemma}
    Let $\mathcal{X} \subseteq \realsN{n}$ with $0 \in \mathcal{X}$ be a closed, invariant set for closed-loop system \labelcref{eq:GeneralClosedLoopSystem}. Let $V\colon \realsN{n} \mapsto \realsN{}$ be a continuous function in $x$ and assume it satisfies
    \begin{align}
        \label{eq:MinimumLemmaDecreaseCondition}
        V(x) - V(x^+) - \|x\|^2 &\geq 0 \quad \forall x \in \mathcal{X}, \\
        \label{eq:MinimumLemmaNonNegativityCondition}
        V(x) &\geq 0 \quad \forall x \in \mathcal{X},
    \end{align}
    with $x^+ = f\left(x, \varphi(x)\right)$. Then, it holds that
    \begin{equation}
        \argmin_{x \in \mathcal{X}} \  V\left(x\right) = \{ 0 \}.
        \label{eq:MinimumOfVLemma}
    \end{equation}
\end{lemma}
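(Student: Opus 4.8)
The plan is to exploit the invariance of $\mathcal{X}$ to propagate an entire forward trajectory through the region where both hypotheses hold, and then to run a monotone-convergence/telescoping argument on the values of $V$ along that trajectory. I would fix an arbitrary initial condition $x(0) = x_0 \in \mathcal{X}$ and let $x(k+1) = f(x(k), \varphi(x(k)))$ be the resulting closed-loop trajectory. Because $\mathcal{X}$ is invariant, $x(k) \in \mathcal{X}$ for every $k \in \integersN{}_{\geq 0}$, so the decrease condition \labelcref{eq:MinimumLemmaDecreaseCondition} may be applied at each time step, which would yield $V(x(k)) - V(x(k+1)) \geq \|x(k)\|^2 \geq 0$.

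This immediately shows that the sequence $\{V(x(k))\}$ is non-increasing, while the non-negativity condition \labelcref{eq:MinimumLemmaNonNegativityCondition} shows it is bounded below by $0$. Hence it converges, and the successive differences $V(x(k)) - V(x(k+1))$ must tend to $0$. Since these differences dominate $\|x(k)\|^2$, I would conclude $\|x(k)\| \to 0$, i.e.\ the trajectory converges to the origin. This is exactly where continuity of $V$ enters: it gives $\lim_{k\to\infty} V(x(k)) = V(0)$, identifying the limit of the monotone sequence.

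The remaining step is to turn this convergence into a quantitative lower bound at $x_0$. Telescoping the per-step inequality over $k = 0, \dots, N-1$ would give $V(x_0) - V(x(N)) \geq \sum_{k=0}^{N-1} \|x(k)\|^2 \geq \|x_0\|^2$, and letting $N \to \infty$ while using $V(x(N)) \to V(0)$ would yield $V(x_0) \geq V(0) + \|x_0\|^2$. Since $x_0 \in \mathcal{X}$ was arbitrary, this holds on all of $\mathcal{X}$, and it shows $V(0) \leq V(x_0)$ with \emph{strict} inequality whenever $x_0 \neq 0$. Thus $0$ is the unique minimizer of $V$ over $\mathcal{X}$, which is precisely \labelcref{eq:MinimumOfVLemma}. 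The standing assumption $0 = f(0, \varphi(0))$ guarantees the origin is a genuine equilibrium, so the degenerate case $x_0 = 0$ is consistent with every inequality above.

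I expect the main obstacle to be the justification of the passage to the limit rather than any algebra: one must argue carefully that summability (indeed, merely boundedness of the partial sums) of $\|x(k)\|^2$ forces $x(k) \to 0$, and then invoke continuity of $V$ to identify $\lim_k V(x(k))$ with $V(0)$. Both structural ingredients are essential here — invariance ensures the trajectory never leaves $\mathcal{X}$ so that the decrease condition keeps applying, and non-negativity provides the lower bound that makes the monotone sequence convergent — while continuity in $x$ (\cref{assume:ContinuityInLiftingVariables}) is exactly what licenses the final limit identification.
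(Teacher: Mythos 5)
Your proof is correct, and it takes a genuinely different route from the paper's. The paper argues statically: from \cref{eq:MinimumLemmaDecreaseCondition,eq:MinimumLemmaNonNegativityCondition} and invariance it first deduces $V(x) \geq \|x\|^2$ on all of $\mathcal{X}$, splits $\mathcal{X}$ into the compact piece $\mathcal{X}_1 = \{x \in \mathcal{X} \mid \|x\|^2 \leq V(0)\}$ and its complement $\mathcal{X}_2$ (on which $V(x) > V(0)$ is immediate), invokes the extreme value theorem on $\mathcal{X}_1$ --- this is precisely where closedness of $\mathcal{X}$ is needed --- and then kills any nonzero minimizer $\bar{x}$ with a \emph{single} step of the dynamics: $V(\bar{x}^+) \leq V(\bar{x}) - \|\bar{x}\|^2 < V(\bar{x}) \leq V(0)$, contradicting minimality. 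Your argument is dynamic: you propagate an entire trajectory, telescope the decrease condition, and pass to the limit. Each approach buys something. The paper's proof is shorter and involves no limiting arguments at all --- no convergence of trajectories, no identification of limits --- at the price of using closedness of $\mathcal{X}$ and the Weierstrass theorem. Your proof requires the more delicate asymptotic reasoning (all of which you carry out correctly: monotone convergence of $V(x(k))$, vanishing of the increments, hence $x(k) \to 0$, hence $V(x(k)) \to V(0)$ by continuity), but it never uses closedness of $\mathcal{X}$ --- only $0 \in \mathcal{X}$ and continuity of $V$ at the origin --- so it proves a slightly more general statement; it yields the quantitative gap $V(x_0) \geq V(0) + \sum_{k \geq 0} \|x(k)\|^2 \geq V(0) + \|x_0\|^2$ rather than bare strict inequality; and it establishes, as a byproduct, that every trajectory starting in $\mathcal{X}$ converges to the origin, i.e.\ attractivity of the equilibrium within $\mathcal{X}$, which the paper only obtains downstream in \cref{thm:CustomLyapunovFunctionProof} via the Lyapunov-function machinery. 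One small attribution slip at the end of your write-up: inside \cref{lemma:MinimumLemma}, continuity of $V$ in $x$ is a hypothesis of the lemma itself, whereas \cref{assume:ContinuityInLiftingVariables} is what guarantees that hypothesis holds when $V$ is built from the lifted SOS parameterization; it is the former, not the assumption, that licenses your limit identification here.
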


\begin{proof}
    Define 
    \begin{align}
        \mathcal{X}_1 = \{ x \in \mathcal{X} \mid \|x\|^2 \leq V(0)\}, \\
        \mathcal{X}_2 = \{ x \in \mathcal{X} \mid \|x\|^2 > V(0)\},
    \end{align}
    such that $\mathcal{X} = \mathcal{X}_1 \cup \mathcal{X}_2$ and $0 \in \mathcal{X}_1$ for all $V$. By \cref{eq:MinimumLemmaDecreaseCondition,eq:MinimumLemmaNonNegativityCondition}, and the assumed invariance of $\mathcal{X}$, $V(x) \geq \|x\|^2$ for all $x \in \mathcal{X}$. This directly implies
    \begin{equation}
        V(0) < \|x\|^2 \leq V(x) \quad \forall x \in \mathcal{X}_2.
    \end{equation}
    
    Next, by construction, $\mathcal{X}_1$ is compact. Continuity of $V$ in $x$ implies that $\argmin_{x \in \mathcal{X}_1} \  V(x) = \mathcal{V}_{\textrm{argmin}}$ must be non-empty. In an argument by contradiction, assume $\exists \, \bar{x} \in \mathcal{V}_{\textrm{argmin}}$ with $\bar{x} \neq 0$. By \cref{eq:MinimumLemmaDecreaseCondition} it follows
    \begin{equation}
        \label{eq:MinimumLemmaContradiction}
        V(\bar{x}^{\raisebox{0.2ex}{$\scriptstyle+$}}) \leq V(\bar{x}) - \|\bar{x}\|^2 < \underbrace{V(\bar{x}) \leq V(0)}_{\text{by assumption}}.
    \end{equation}
    \Cref{eq:MinimumLemmaContradiction} provides a contradiction for $\bar{x} \in \mathcal{V}_{\textrm{argmin}}$. Therefore, $0$ must be the only element in $\mathcal{V}_{\textrm{argmin}}$, proving $V(0) < V(x)$ for all $x \in \mathcal{X}\setminus\{0\}$. 
\end{proof}

With \cref{lemma:MinimumLemma}, and following the definition of Rawlings \cite{mybibfile:Rawlings2017}[Definition~B.12] for a Lyapunov function, 
it is simple to show that a solution to the stability verification procedures of \cref{sec:OrigGlobalStability,sec:OrigLocalStability} result in a valid Lyapunov function.

\begin{thm}[Validity of Local Lyapunov Function $V$]
    \label{thm:CustomLyapunovFunctionProof}
    The function $V$ defined by successfully completing the local stability analysis of \cref{sec:OrigLocalStability} defines a valid Lyapunov function in sublevel set $\mathcal{L}_\gamma(V)$ for $x^+ = f\left(x, \varphi(x)\right)$.
\end{thm}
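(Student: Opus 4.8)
The plan is to verify that $V$, regarded as the continuous state function $V\!\left(\zeta(x)\right)$ supplied by \cref{assume:ContinuityInLiftingVariables}, meets the three defining conditions of a Lyapunov function on the invariant set $\mathcal{X} = \mathcal{L}_\gamma(V)$ in the sense of \cite{mybibfile:Rawlings2017}[Definition~B.12], using \cref{lemma:MinimumLemma} as the central engine. Completing the procedure of \cref{sec:OrigLocalStability} hands us three facts to build on: (i) the parameterization \eqref{eq:OrigLyapunovParam} forces $V(x) \geq 0$ for every $x$, since on $\mathbf{K}_\varphi$ the multipliers are \acrshort{SOS} and $g(\zeta) \geq 0$; (ii) the second program \eqref{eq:SDPFormulationLargestSublevelSet} certifies $\mathcal{L}_\gamma(V) \subseteq \mathcal{Q}$; and (iii) the final check gives $V\!\left(\zeta(0)\right) < \gamma$.

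First I would establish the hypotheses of \cref{lemma:MinimumLemma} for $\mathcal{X} = \mathcal{L}_\gamma(V)$. Closedness is immediate from continuity of $V$, and $0 \in \mathcal{X}$ follows from (iii). For invariance, I would note that every $x \in \mathcal{L}_\gamma(V)$ lies in $\mathcal{Q}$ by (ii), so the decrease inequality \eqref{eq:LocalLyapunovDecreaseCond} applies pointwise: evaluated on $\mathbf{K}_L$ with $q(x) \geq 0$, its right-hand side is a scalar \acrshort{SOS} term plus \acrshort{SOS} multipliers against non-negative constraint vectors (the equality terms vanishing), hence $V(x) - V(x^+) - \|x\|^2 \geq 0$ and therefore $V(x^+) \leq V(x) \leq \gamma$, i.e. $x^+ \in \mathcal{L}_\gamma(V)$. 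With non-negativity from (i) and this decrement in hand, \cref{lemma:MinimumLemma} yields $\argmin_{x \in \mathcal{L}_\gamma(V)} V(x) = \{0\}$, and the bound $V(x) \geq \|x\|^2$ recovered inside its proof by summing the decrease along trajectories.

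It then remains to assemble the three Lyapunov conditions. The chain $\|x\|^2 \leq V(x) \leq \gamma$ shows $\mathcal{L}_\gamma(V)$ is bounded, hence compact. Passing to $\bar V(x) = V(x) - V(0)$, which is continuous, vanishes at the origin, and is strictly positive elsewhere on $\mathcal{L}_\gamma(V)$ by the strict minimizer from \cref{lemma:MinimumLemma}, a standard compactness-and-continuity argument produces class-$\mathcal{K}$ functions $\alpha_1, \alpha_2$ with $\alpha_1(\|x\|) \leq \bar V(x) \leq \alpha_2(\|x\|)$, while the decrease inequality gives $\bar V(x^+) - \bar V(x) \leq -\|x\|^2$, furnishing $\alpha_3(\|x\|) = \|x\|^2$. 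Since the additive constant $V(0)$ affects neither the minimizer nor the decrement, this certifies $V$ as a valid Lyapunov function on $\mathcal{L}_\gamma(V)$, so the origin is \acrshort{LAS} and $\mathcal{L}_\gamma(V)$ lies in its \acrshort{ROA}. I expect the main obstacle to be the bookkeeping that legitimizes reading \eqref{eq:LocalLyapunovDecreaseCond} as the genuine state-space decrement $V(x) - V\!\left(f(x,\varphi(x))\right) - \|x\|^2 \geq 0$ — this is exactly where \cref{assume:ContinuityInLiftingVariables} is needed to supply a consistent continuous selection of lifting variables for both $\zeta$ and its successor $\zeta^+$ — together with the subtlety that $V(0)$ need not vanish, which the shift to $\bar V$ resolves.
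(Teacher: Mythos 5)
Your proposal is correct and takes essentially the same route as the paper's own proof: both apply \cref{lemma:MinimumLemma} with $\mathcal{X} = \mathcal{L}_\gamma(V)$ (closed, invariant, containing the origin because $V\left(\zeta(0)\right) < \gamma$ and $\mathcal{L}_\gamma(V) \subseteq \mathcal{Q}$), shift to $\tilde{V}(x) = V(x) - V(0)$, and then check the three conditions of a Lyapunov function in the sense of Rawlings' Definition~B.12. The only cosmetic differences are that you make the invariance of the sublevel set explicit (the paper asserts it "by construction") and you invoke a generic compactness-and-continuity argument for the class-$\mathcal{K}$ bounds, whereas the paper takes $\|x\|^2$ directly as the lower bound and uses Kalman's supremum construction for the upper bound.
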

\begin{proof}
    Under \cref{assume:ContinuityInLiftingVariables}, the solution to \acrshort{SDP} \labelcref{eq:SDPFormulationLocalAsymptoticStability} defines a continuous, non-negative function $V \colon \realsN{n} \mapsto \realsN{}$ whose value is guaranteed to decrease by $\|x\|^2$ over time for all $x \in \mathcal{Q}$. In addition, the solution to optimization problem \labelcref{eq:SDPFormulationLargestSublevelSet} ensures sublevel set $\mathcal{L}_\gamma(V) \subseteq \mathcal{Q}$, with $\mathcal{L}_\gamma(V)$ by construction a closed, invariant set. Finally, $V(0) < \gamma$ ensures $0 \in \operatorname{int}(\mathcal{L}_\gamma(V))$, allowing \cref{lemma:MinimumLemma} to be applied for $\mathcal{X} = \mathcal{L}_\gamma(V)$. Next, define $\tilde{V}(x) = V(x) - V(0)$, which possesses the following properties for all $x \in \mathcal{L}_\gamma(V)$:
\begin{enumerate}
    \item By invariance of sublevel set $\mathcal{L}_\gamma(V)$ and positive definiteness of $\tilde{V}$ in $\mathcal{L}_\gamma(V)$,
    \begin{equation}
        \begin{aligned}
            \tilde{V}(x) &\geq \tilde{V}(x) - \tilde{V}(x^+), \\
                                &\geq \|x\|^2.
        \end{aligned}
    \end{equation}
    Thus, $\underline{\alpha}(\|x\|) = \|x\|^2$ is a class $\mathcal{K}_{\infty}$ function that serves as a lower bound.
    \item Following the work of Kalman \cite{mybibfile:Kalman1960_CT}, by continuity of $\tilde{V}(x)$, 
    \begin{equation}
        \overline{\alpha}(\|x\|) = \sup_{\substack{y \in \mathcal{L}_\gamma(V) \\ \|y\| \leq \|x\|}} \tilde{V}(y) + \|x\|^2,
    \end{equation}
    is a class $\mathcal{K}_{\infty}$ function that serves as an upper bound for $\tilde{V}(x)$.
    \item From the solution to \acrshort{SDP} \labelcref{eq:SDPFormulationLocalAsymptoticStability}, 
    \begin{equation}
        \tilde{V}(x^+) - \tilde{V}(x) \leq -\beta(\|x||) \leq -\|x\|^2.
    \end{equation}
    Thus $-\beta(\|x||) = -\|x\|^2$ is a continuous, negative definite function upper bounding the difference $\tilde{V}(x^+) - \tilde{V}(x)$.
\end{enumerate}
It immediately follows that $\tilde{V}$ is a valid Lyapunov function in sublevel set $\mathcal{L}_\gamma(V)$ for $x^+ = f\left(x, \varphi(x)\right)$.
\end{proof}

The proof for a global stability analysis follows analogously by applying \cref{lemma:MinimumLemma} with $\mathcal{X} = \realsN{n}$.

In summary, \cref{lemma:MinimumLemma} can be interpreted as defining a set of conditions for which it can be guaranteed that a positive semidefinite, continuous function is a (candidate) Lyapunov function over $\mathcal{X}$. This interpretation, which explains why global stability analyses do not require solving a second optimization problem, leads to two alternative \acrshort{SDP} formulations addressing the drawbacks identified at the beginning of this section.

\subsection{Explicit Candidate Lyapunov Functions}
\label{sec:ReducedParamV}
In general, when performing a local stability analysis, it is not known a priori whether the set $\mathcal{Q}$ chosen is an invariant set, making it difficult to satisfy the conditions of \cref{lemma:MinimumLemma}. 
However, for a fixed neural-network-based controller, the guarantees afforded by \cref{lemma:MinimumLemma} can also be obtained by limiting the parameterization of potential solutions $V$. That is, the formulation of \cref{eq:OrigLyapunovParam} can be chosen such that all functions $V$ are candidate Lyapunov functions in the set $\mathcal{Q}$ and thus satisfy: a) $V\left(\zeta(0)\right) = 0$ and b) $V\left(\zeta(x)\right) > 0$ for all $x \in \mathcal{Q}\setminus\{0\}$.

Subject to \cref{assume:ContinuityInLiftingVariables}, a (reduced) parameterization with these properties can take the form
\begin{equation}
    \label{eq:ReducedLyapunovParam}
    V(\zeta) = \nu(\Delta \zeta)\transpose Q \nu(\Delta \zeta) + \\ x\transpose P x + S_g(\zeta).
\end{equation}
Here, $\Delta \zeta(x) = \zeta(x) - \zeta(0)$, $\nu(\Delta \zeta)$ is a vector of monomial terms as in \cref{eq:SOSDefinition} and $Q \succeq 0$ is a positive semidefinite optimization variable. Furthermore, $P \succ 0$ represents a positive definite matrix optimization variable, and $S_g$ is a sum of terms constructed using the definition of $\mathbf{K}_\varphi$ with the property that $S_g\left(\zeta(x)\right) \geq 0$ for $x \in \mathcal{Q}$ and $S_g\left(\zeta(0)\right) = 0$. 

To introduce the terms comprising $S_g$, the following notation is required. Given the semialgebraic set $\mathbf{K}_\varphi$, let $\mathcal{I}_0$ denote the set of all entries in $g$ equal to zero at the origin, i.e. $\mathcal{I}_0 = \left\{i \in [n_g] \mid g_i(\zeta(0)) = 0\right\}$, and define the complement $\overline{\mathcal{I}_0} = [n_g] \setminus \mathcal{I}_0$. The sum $S_g$ then consists of:
\begin{enumerate}
    \item terms of the form
    \begin{equation}
        \label{eq:S_g_Full_SOS_Multiplier}
        \sigma^{V}_{\textrm{ineq}}(\zeta) \prod_{\substack{\mathcal{I} \subseteq \mathcal{I}_0 \\ |\mathcal{I}| \geq 1}} g_{\mathcal{I}}(\zeta) \prod_{\mathcal{J} \subseteq \overline{\mathcal{I}_0}} g_{\mathcal{J}}(\zeta),
    \end{equation}
    with $\sigma^{V}_{\textrm{ineq}}$ a \acrshort{SOS} polynomial and $g$ from $\mathbf{K}_\varphi$, and
    \item terms of the form
    \begin{equation}
        \label{eq:S_g_Reduced_SOS_Multiplier}
        \sigma^{V}_{\textrm{ineq},0}(\Delta \zeta) \prod_{\mathcal{J} \subseteq \overline{\mathcal{I}_0}} g_{\mathcal{J}}(\zeta),
    \end{equation}
    with $\sigma^{V}_{\textrm{ineq},0}$ a \acrshort{SOS} polynomial equal to zero at $x=0$, which can be defined using $\Delta \zeta(x)$ as in \cref{eq:ReducedLyapunovParam}, and $g$ from $\mathbf{K}_\varphi$.
\end{enumerate}

Thus, by construction, all functions parameterized by \cref{eq:ReducedLyapunovParam} are candidate Lyapunov functions. This class of candidate Lyapunov functions is smaller than the unrestricted parameterization of \cref{eq:OrigLyapunovParam}, but significantly larger than previous classes \cite{mybibfile:Pauli2021,mybibfile:Yin2022}. 

An important and widely used set of neural networks that benefit significantly from these enhanced formulations are networks utilizing \acrshort{ReLU} activation functions. The semialgebraic set description of \cref{eq:SingleReLUSemialgebraicSet} implies that for any input value at least one of the two set inequalities must be equal to $0$, guaranteeing that $|\mathcal{I}_0| > 0$ and thereby allowing $S_g$ to consist of a potentially large number of terms. This is demonstrated in the following example.

\begin{exmp}
    Consider an open-loop, polynomial dynamical system $x^+ = f(x,u)$  with $f \colon \realsN{2} \times \realsN{2} \mapsto \realsN{2}$ subject to the neural-network-based controller
    \begin{equation}
        u(x) = \begin{bmatrix}
            \hat{\lambda}_{\textrm{sp}}(x_2)\\
            \text{ReLU}(x_1 - 1) 
        \end{bmatrix}.
    \end{equation}
    The vector-valued, polynomial function $g$ defining semialgebraic set $\mathbf{K}_\varphi$ is then given as 
    \begin{equation}
        g = \begin{bmatrix}
            u_1 \\
            u_1 - x_2 \\
            u_2  \\
            u_2 - x_1 + 1
        \end{bmatrix},
    \end{equation}
    with $\mathcal{I}_0 = \{3\}$ and $\overline{\mathcal{I}_0} = \{1, 2, 4\}$. Defining $\zeta\transpose = [x\transpose, u\transpose]$ 
    and noting $g_3g_4= 0$, all quadratic candidate Lyapunov functions in parameterization \cref{eq:ReducedLyapunovParam} are given by
    \begin{multline}
        V(\zeta) = \left(\Delta\zeta\right)\transpose Q \left(\Delta\zeta\right) + x\transpose P x + q_1 g_3(\zeta) + \\
            q_2 g_1(\zeta) g_3(\zeta) + q_3 g_2(\zeta) g_3(\zeta),
    \end{multline}
    with $Q \succeq 0, \, P \succ 0, \, q_1, \, q_2, \, q_3 \geq 0$ positive (semi)definite optimization variables. 
    
    In case (sparse) quartic candidate Lyapunov functions are sought, following \cref{eq:S_g_Full_SOS_Multiplier} and \cref{eq:S_g_Reduced_SOS_Multiplier}, additional terms can be included such as 
    \begin{equation}
        \begin{bmatrix}
            1 \\
            \zeta
        \end{bmatrix}\transpose
        Q_2 
        \begin{bmatrix}
            1 \\
            \zeta
        \end{bmatrix}
        g_3(\zeta) g_1(\zeta)
    \end{equation}
    instead of $q_2 g_1(\zeta) g_3(\zeta)$, and $(\Delta \zeta)\transpose R (\Delta \zeta) g_1(\zeta)$ with $R, \, Q_2$ positive semidefinite optimization variables. This demonstrates the flexibility of the candidate Lyapunov function parameterization \cref{eq:ReducedLyapunovParam}.
\end{exmp}

The candidate Lyapunov function parameterization of \cref{eq:ReducedLyapunovParam} can be incorporated in the local stability analysis procedure by replacing constraint \cref{eq:OrigLyapunovParam} with \cref{eq:ReducedLyapunovParam} in \acrshort{SDP} \labelcref{eq:SDPFormulationLocalAsymptoticStability}. Under \cref{assume:qClosedOriginInterior}, any solution to this adapted \acrshort{SDP} guarantees the existence of a sublevel set $\mathcal{L}_\gamma(V) \subseteq \mathcal{Q}$ for some $\gamma > 0$, directly proving the system is \acrshort{LAS} and guaranteeing the existence of a non-trivial estimate of the closed-loop system's \acrshort{ROA}. Unfortunately, to obtain a value for $\gamma$ and define sublevel set $\mathcal{L}_\gamma(V)$, optimization problem \labelcref{eq:SDPFormulationLargestSublevelSet} must still be solved.

\begin{rem}
    The product of different entries of $g$ can also be used to expand the parameterization of \cref{eq:OrigLyapunovParam} or any of the \acrshort{SOS} conditions at the cost of increasing the size of the resulting \acrshort{SDP}. 
\end{rem}

\subsection{Sequential Invariant Sets}
\label{sec:BarrierFunc}
Whilst the approach of \cref{sec:ReducedParamV} improves the local stability analysis by improving the guarantees obtained if the first of two optimization problems can be solved, a more direct approach directly ensures that the set $\mathcal{Q}$ meets the requirements of \cref{lemma:MinimumLemma}. In this case, the proof of \cref{thm:CustomLyapunovFunctionProof} can directly be applied to $V$ defined by a solution to \acrshort{SDP} \labelcref{eq:SDPFormulationLocalAsymptoticStability} to simultaneously prove the system is \acrshort{LAS} and certify that the set $\mathcal{Q}$ forms part of the system's \acrshort{ROA}.

In order to systematically guarantee the invariance of the set $\mathcal{Q}$, the function $q$ defining $\mathcal{Q}$ via \cref{eq:LocalRegionQDefinition} is defined to be an optimization variable in the form of a scalar function
\begin{equation}
    \label{eq:qDefOptimization}
    q(\zeta) = \alpha - \sigma_q(\zeta),
\end{equation}
with $\alpha$ a scalar variable and $\sigma_q$ a \acrshort{SOS} polynomial in $\zeta$ equal to zero at $x = 0$, which can be constructed as shown in \cref{sec:ReducedParamV}. In addition, $q$ must be continuous in $x$, which is guaranteed by the following assumption.
\begin{assume}
    \label{assume:qDefContinuityInLiftingVariables}
    Let $\lambda_{\mathcal{I}_q}$, $u_{\mathcal{J}_q}$ denote the entries of $\lambda$ and $u$ used in the parameterization of $q$, respectively. Assume the entries $\mathcal{J}_q$ of $\varphi$ are continuous in $x$, and there exists a continuous function $f^q_{\lambda} \colon \realsN{n} \mapsto \realsN{|\mathcal{I}_q|}$ identically equal to $f^V_\lambda$ in all entries $i \in \mathcal{I}_q\cap\mathcal{I}_V$, such that for all $x \in \realsN{n}$, there exists $[x\transpose, \lambda\transpose, \varphi(x)\transpose]\transpose \in \mathbf{K}_\varphi$ with $\lambda_{\mathcal{I}_q} = f^q_\lambda(x)$.
\end{assume}

Taken together, this constraint is indicated as $\sigma_q \in \mathcal{Q}_0$. If, in addition, it is guaranteed that $\alpha > 0$, by the assumed continuity of $q$ in $x$, it holds that $x = 0$ lies in the interior of the closed set $\mathcal{Q}$. Finally, to guarantee the invariance of the set $\mathcal{Q}$ defined by \cref{eq:qDefOptimization}, an additional constraint based on the concept of a discrete-time barrier function \cite{mybibfile:Agrawal2017} is formulated as
\begin{multline}
    \label{eq:BarrierCond}
    \| \xi \|^{2k} q(\zeta^+) \geq \sigma^{\Delta \mathcal{Q}}(\xi) \, + \\ 
    \sigma^{\Delta \mathcal{Q}}_{\textrm{ineq}}(\xi)\transpose
        \begin{bmatrix}
        g(\zeta) \\
        q(\zeta) \\
        g(\zeta^+)
    \end{bmatrix} +
    p^{\Delta \mathcal{Q}}_{\textrm{eq}}(\xi)\transpose
    \begin{bmatrix}
        h(\zeta) \\
        h(\zeta^+) \\
        x^+ - f(x, u)
    \end{bmatrix},
\end{multline}
with $\sigma^{\Delta \mathcal{Q}}$ a scalar \acrshort{SOS} polynomial, $\sigma^{\Delta \mathcal{Q}}_{\textrm{ineq}}$ a vector of \acrshort{SOS} polynomials, $p^{\Delta \mathcal{Q}}_{\textrm{eq}}$ a vector of arbitrary polynomial and $k \in \integersN{}_{\geq 0}$. Note how, contrary to \cref{eq:LocalLyapunovDecreaseCond}, the value $q(\zeta^+)$ is not constrained to decrease, but merely remain positive. Furthermore, compared to \cref{eq:SublevelSetCond}, a fixed \acrshort{SOS} multiplier is used on the left-hand side of \cref{eq:BarrierCond}.

Combining all constraints leads to an augmented optimization problem of the form
\begin{subequations}
    \label{eq:OptimProblemFormulationInvariantLocalAsymptoticStability}
    \begin{alignat}{4}
        &\span\span \text{find:} \ &\sigma^V, \, \sigma^V_{\textrm{ineq}}, \, 
        \sigma^{\Delta V},  \, \sigma^{\Delta V}_{\textrm{ineq}}, &  \,  \sigma^{\Delta \mathcal{Q}}, \, \sigma^{\Delta \mathcal{Q}}_{\textrm{ineq}}, \, \sigma_q, \, p^{\Delta V}_{\textrm{eq}}, \, p^{\Delta \mathcal{Q}}_{\textrm{eq}}, \, \alpha \span \nonumber \\
        &\span\span \text{s.t.} \ & \eqref{eq:OrigLyapunovParam}, \, \, \eqref{eq:LocalLyapunovDecreaseCond},  \, \, \eqref{eq:LocalAsymptoticStabilitySOS},  \ &  \eqref{eq:LocalAsymptoticStabilityP}, \, \,   \eqref{eq:qDefOptimization}, \, \, \eqref{eq:BarrierCond} \span  \label{eq:InvariantLocalAsymptoticStabilityPreviousConstraints} \\
        &\span\span & \sigma^{\Delta \mathcal{Q}},  \, \sigma^{\Delta \mathcal{Q}}_{\textrm{ineq}} \quad & \text{\acrshort{SOS} polynomials,} \label{eq:InvariantLocalAsymptoticStabilitySOS} \\
        &\span\span & 
        p^{\Delta \mathcal{Q}}_{\textrm{eq}} \quad & \text{arbitrary polynomials,} \label{eq:InvariantLocalAsymptoticStabilityP} \\
        &\span\span & \sigma_q \in & \ \mathcal{Q}_0.  \label{eq:InvariantLocalAsymptoticStabilityPqZeroConstraint} \\
        &\span\span & \alpha > & \ 0.  \label{eq:InvariantLocalAsymptoticStabilityGammaStrictlyPositive}
    \end{alignat}
\end{subequations}
By construction, solutions of this optimization problem form a valid stability certificate.
\begin{thm}
    \label{thm:ValidityOptimProblemInvariantLocalAsymptoticStability}
    Under \cref{assume:ContinuityInLiftingVariables,assume:qDefContinuityInLiftingVariables}, and assuming $f \circ (\textrm{id}, \varphi) \colon \realsN{n} \mapsto \realsN{n}$ is locally bounded, any solution to optimization problem \labelcref{eq:OptimProblemFormulationInvariantLocalAsymptoticStability} serves as a certificate that closed-loop system \labelcref{eq:GeneralClosedLoopSystem} is \acrshort{LAS}, as well as certifying that the set $\mathcal{Q}$ forms part of the closed-loop system's \acrshort{ROA}.
\end{thm}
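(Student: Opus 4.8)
The plan is to reduce the statement to a single application of \cref{lemma:MinimumLemma} with $\mathcal{X} = \mathcal{Q}$, followed by the Lyapunov-function construction already carried out in the proof of \cref{thm:CustomLyapunovFunctionProof}. First I would record the ``cheap'' feasibility consequences. Under \cref{assume:ContinuityInLiftingVariables}, the function $V$ built from \cref{eq:OrigLyapunovParam} is continuous in $x$ and, since $\sigma^V$ and the entries of $\sigma^V_{\textrm{ineq}}$ are \acrshort{SOS} while $g(\zeta) \geq 0$ on $\mathbf{K}_\varphi$, non-negative on $\mathcal{Q}$ (so \cref{eq:MinimumLemmaNonNegativityCondition} holds). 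Evaluating \cref{eq:LocalLyapunovDecreaseCond} at any $x \in \mathcal{Q}$ (where $g(\zeta), q(\zeta), g(\zeta^+) \geq 0$ and every equality block vanishes) makes its right-hand side non-negative, giving $V(x) - V(x^+) - \|x\|^2 \geq 0$, i.e. \cref{eq:MinimumLemmaDecreaseCondition}. Next, \cref{eq:qDefOptimization} together with $\sigma_q \in \mathcal{Q}_0$ (so $\sigma_q(\zeta(0)) = 0$) and $\alpha > 0$ gives $q(\zeta(0)) = \alpha > 0$; by \cref{assume:qDefContinuityInLiftingVariables} the map $q$ is continuous in $x$, so $\mathcal{Q}$ is closed and $0 \in \operatorname{int}(\mathcal{Q})$.

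The crux, and what I expect to be the main obstacle, is establishing that $\mathcal{Q}$ is \emph{invariant} for \cref{eq:GeneralClosedLoopSystem} using the barrier constraint \cref{eq:BarrierCond}. Fix any $x \in \mathcal{Q}$ and its associated lifting and successor variables, so that the corresponding $\xi$ is a valid point of $\mathbf{K}_L$ with $x^+ = f(x,\varphi(x))$. As before, every term on the right-hand side of \cref{eq:BarrierCond} is non-negative on $\mathcal{Q}$ (\acrshort{SOS} multipliers times the non-negative $g(\zeta), q(\zeta), g(\zeta^+)$, with the equality block vanishing), hence $\|\xi\|^{2k} q(\zeta^+) \geq 0$. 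For $x \neq 0$ one has $\|\xi\| \geq \|x\| > 0$, so dividing by $\|\xi\|^{2k} > 0$ yields $q(\zeta^+) \geq 0$, i.e. $x^+ \in \mathcal{Q}$; for $x = 0$ invariance is immediate from $0 = f(0,\varphi(0)) \in \operatorname{int}(\mathcal{Q})$. This division-plus-equilibrium argument is delicate precisely because \cref{exmp:CounterexampleLAS} shows the naive condition $q(\zeta^+) \geq 0$ cannot be imposed directly; the multiplier $\|\xi\|^{2k}$ is the device that keeps the constraint feasible while remaining, away from the origin, genuinely equivalent to invariance. Here I would also verify that the continuity furnished by \cref{assume:qDefContinuityInLiftingVariables} lets $q(\zeta^+) \geq 0$ be read as the state-level statement $x^+ \in \mathcal{Q}$, so that the lifted inequality really certifies invariance of the set $\mathcal{Q}$.

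With $\mathcal{Q}$ shown to be a closed, invariant set containing $0$ in its interior, \cref{lemma:MinimumLemma} applies with $\mathcal{X} = \mathcal{Q}$ and gives $\argmin_{x \in \mathcal{Q}} V(x) = \{0\}$, so that $\tilde{V}(x) = V(x) - V(0)$ is positive definite on $\mathcal{Q}$. From here I would transcribe the three bounds of \cref{thm:CustomLyapunovFunctionProof} with $\mathcal{L}_\gamma(V)$ replaced by $\mathcal{Q}$: invariance together with the decrease condition gives the class-$\mathcal{K}_\infty$ lower bound $\underline{\alpha}(\|x\|) = \|x\|^2$; continuity of $\tilde{V}$ (with local boundedness of $f \circ (\textrm{id}, \varphi)$ as a standing hypothesis) yields a class-$\mathcal{K}_\infty$ upper bound via Kalman's supremum construction; and the decrease condition supplies the negative-definite bound $-\|x\|^2$. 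Thus $\tilde{V}$ is a valid Lyapunov function on $\mathcal{Q}$, proving \acrshort{LAS}. Finally, for the \acrshort{ROA} claim I would note that invariance keeps every trajectory started in $\mathcal{Q}$ inside $\mathcal{Q}$, while telescoping $\tilde{V}(x(k{+}1)) \leq \tilde{V}(x(k)) - \|x(k)\|^2$ over the bounded-below, non-increasing sequence $\tilde{V}(x(k))$ forces $\sum_{k} \|x(k)\|^2 \leq \tilde{V}(x(0)) < \infty$, hence $\|x(k)\| \to 0$; since this holds for every $x(0) \in \mathcal{Q}$, the set $\mathcal{Q}$ lies in the closed-loop system's \acrshort{ROA}.
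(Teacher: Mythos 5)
Your proposal is correct and follows essentially the same route as the paper: establish that $\mathcal{Q}$ is closed, contains the origin in its interior, and is invariant via \cref{eq:BarrierCond}, then apply \cref{lemma:MinimumLemma} with $\mathcal{X} = \mathcal{Q}$ and repeat the Lyapunov-function construction from the proof of \cref{thm:CustomLyapunovFunctionProof}. The only differences are ones of detail rather than approach: you spell out the division-by-$\|\xi\|^{2k}$ argument for invariance (which the paper asserts in one clause), and you replace the paper's citation of \cite{mybibfile:Rawlings2017}[Thm.~B.18] for the \acrshort{LAS} and \acrshort{ROA} conclusions with a direct telescoping argument, both of which are sound.
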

\begin{proof}
    Any solution to optimization problem \labelcref{eq:OptimProblemFormulationInvariantLocalAsymptoticStability} certifies that the set $\mathcal{Q}$: a) contains the origin by \cref{eq:InvariantLocalAsymptoticStabilityPqZeroConstraint,eq:InvariantLocalAsymptoticStabilityGammaStrictlyPositive}, b) is closed by the assumed continuity of $q$ in $x$, and c) is invariant by \cref{eq:BarrierCond}. Likewise, the function $V$: a) is assumed continuous in $x$, b) is a non-negative function by construction, and c) satisfies \cref{eq:MinimumLemmaDecreaseCondition} with $\mathcal{X} = \mathcal{Q}$ by \cref{eq:LocalLyapunovDecreaseCond}. Applying \cref{lemma:MinimumLemma} with $\mathcal{X} = \mathcal{Q}$ and repeating the proof of \cref{thm:CustomLyapunovFunctionProof} shows $\tilde{V}$ is a Lyapunov function in $\mathcal{Q}$ for $x^+ = f\left(x, \varphi(x)\right)$. By the assumption of local boundedness of the closed-loop system, the system is \acrshort{LAS} and $\mathcal{Q}$ is part of the closed-loop system's \acrshort{ROA} \cite{mybibfile:Rawlings2017}[Thm.~B.18].
\end{proof}

Unfortunately, optimization problem \labelcref{eq:OptimProblemFormulationInvariantLocalAsymptoticStability} is not directly convex, if both $\alpha$, $\sigma_q$ and the relevant entries in $\sigma^{\Delta V}_{\textrm{ineq}}$, $\sigma^{\Delta \mathcal{Q}}_{\textrm{ineq}}$, denoted $\sigma^{\Delta V}_{\textrm{ineq}, \, q}$, $\sigma^{\Delta \mathcal{Q}}_{\textrm{ineq}, \, q}$, are optimization variables. Optimizing over both sets of variables simultaneously would thereby prevent a solution from being obtained efficiently. Therefore, following Valmorbida and Anderson \cite{mybibfile:Valmorbida2014}, optimization problem \labelcref{eq:OptimProblemFormulationInvariantLocalAsymptoticStability} is solved by
\begin{enumerate}
    \item Finding an initial solution via linearization techniques.
    \item Utilizing this initial solution to formulate a sequence of \acrshortpl{SDP} with the property that the \acrshort{ROA} verified by the solution of an \acrshort{SDP} is no smaller than that of its predecessor.
\end{enumerate}
Both steps as well as the overall resulting algorithm are discussed in the reaminder of this section.

\subsubsection{Finding An Initial Solution} To formulate the aforementioned sequence of \acrshortpl{SDP}, an initial solution to optimization problem \labelcref{eq:OptimProblemFormulationInvariantLocalAsymptoticStability} is required. Under the assumption that the closed-loop dynamics are continuously differentiable in a neighborhood around the origin, such an initial solution can be found by examining a linearization of the closed-loop dynamics $x^+ = f\left(x, \varphi(x)\right)$
\begin{equation}
    \approx \underbrace{\bigg( \dpd{f(x,u)}{x} \bigg|_{\begin{subarray}{l} x=0 \\
    u = \varphi(0) \\ \phantom{a} \end{subarray}} + \dpd{f(x,u)}{u} \bigg|_{\begin{subarray}{l} x=0 \\
    u = \varphi(0) \\ \phantom{a} \end{subarray}} \dod{\varphi(x)}{x} \bigg|_{x=0}\bigg)}_{A_{\text{lin}}} x,
\end{equation}
where each of the above derivatives can be obtained either analytically or via standard neural network training techniques, e.g. backpropagation. An \acrshort{SDP} can be set up to find a quadratic Lyapunov function, $x\transpose P x$, $P \succ 0$, for this linearized system. If found, $\sigma_q$ may initially be set to equal $x\transpose P x$, such that $\mathcal{Q}$ is guaranteed to be an invariant set of the nonlinear system for some $\alpha > 0$ \cite{mybibfile:Khalil1996}. 
Explicitly setting $\sigma_q = x\transpose P x$, $\alpha \ll 1$ allows an initial solution to optimization problem \labelcref{eq:OptimProblemFormulationInvariantLocalAsymptoticStability} to be found by solving as an \acrshort{SDP}.

\subsubsection{A Sequence of SDPs}
With access to an initial solution to optimization problem \labelcref{eq:OptimProblemFormulationInvariantLocalAsymptoticStability}, a sequence of \acrshortpl{SDP} is set up to find sets belonging to the closed-loop system's \acrshort{ROA}.

Consider optimization problem $\mathcal{A}$ which is defined using a prior solution to optimization problem \labelcref{eq:OptimProblemFormulationInvariantLocalAsymptoticStability}. Fix $\sigma_q$ to the value of this prior solution, denoted $\sigma_q^{\mathcal{A}}$. Optimization problem $\mathcal{A}$ then reads, 
\begin{subequations}
    \label{eq:SDPFormulationInvariantLocalAsymptoticStabilityProbA}
    \begin{alignat}{4}
        &\span\span \text{maximize:} \ &  &\ \; \alpha \span \nonumber \\
        &\span\span \text{s.t.} \ & \eqref{eq:InvariantLocalAsymptoticStabilityPreviousConstraints}, \ & \eqref{eq:InvariantLocalAsymptoticStabilitySOS}, \, \, \eqref{eq:InvariantLocalAsymptoticStabilityP}, \span 
        \label{eq:SDPFormulationInvariantLocalAsymptoticStabilityProbAConstraint}
    \end{alignat}
\end{subequations}
which can be solved as an \acrshort{SDP} via a line search over $\alpha$. Note that the constraint of \cref{eq:InvariantLocalAsymptoticStabilityGammaStrictlyPositive} does not need to be included given the problem's objective function and the assumed existence of a previous solution.

Given a solution to optimization problem \labelcref{eq:SDPFormulationInvariantLocalAsymptoticStabilityProbA}, let $\alpha^{\mathcal{A}}$, $\mathcal{Q}^{\mathcal{A}}$, $\sigma^{\Delta V, \, \mathcal{A}}_{\textrm{ineq}, \, q}$, $\sigma^{\Delta \mathcal{Q}, \, \mathcal{A}}_{\textrm{ineq}, \, q}$ denote the values defined by the relevant variables of this solution. Optimization problem $\mathcal{B}$ is defined by fixing the values of $\sigma^{\Delta V}_{\textrm{ineq}, \, q}$, $\sigma^{\Delta \mathcal{Q}}_{\textrm{ineq}, \, q}$ to the values $\sigma^{\Delta V, \, \mathcal{A}}_{\textrm{ineq}, \, q}$, $\sigma^{\Delta \mathcal{Q}, \, \mathcal{A}}_{\textrm{ineq}, \, q}$, respectively. This allows $\sigma_q$ to be an optimization variable without losing convexity. In addition, with the constraint,
\begin{multline}
    \label{eq:InvariantLocalAsymptoticStabilityQbSupersetConstraint}
    \! \sigma_q^{\mathcal{A}}(\zeta) - \sigma_q(\zeta)  \geq  \sigma_{\mathcal{B}}(\zeta)\transpose
    \begin{bmatrix} 
        g(\zeta) \\
        \alpha - \sigma_q^{\mathcal{A}}(\zeta) 
    \end{bmatrix}
    +  p_{\mathcal{B}}(\zeta)\transpose h(\zeta),
\end{multline}
optimization problem $\mathcal{B}$ can be formulated as
\begin{subequations}
    \label{eq:SDPFormulationInvariantLocalAsymptoticStabilityProbB}
    \begin{alignat}{4}
        &\span\span \text{maximize:} \ &  \alpha & \span \nonumber \\ 
        &\span\span \text{s.t.} \ &  \eqref{eq:InvariantLocalAsymptoticStabilityPreviousConstraints}, \ \eqref{eq:InvariantLocalAsymptoticStabilitySOS}, \ & \eqref{eq:InvariantLocalAsymptoticStabilityP}, \, \, \eqref{eq:InvariantLocalAsymptoticStabilityQbSupersetConstraint}, \span \\ 
        &\span\span & \sigma_{\mathcal{B}}, \, \sigma_q \quad & \text{\acrshort{SOS} polynomial},  \\ 
        &\span\span & p_{\mathcal{B}} \quad & \text{arbitrary polynomial},\label{eq:InvariantLocalAsymptoticStabilityQbSupersetConstraintSOSpolynomial}
    \end{alignat}
\end{subequations}
where \cref{eq:InvariantLocalAsymptoticStabilityQbSupersetConstraint} is to be interpreted as a \acrshort{SOS} constraint. This optimization problem can solved as an \acrshort{SDP} via a line search over $\alpha$. Using the solution to optimization problem $\mathcal{A}$, optimization problem $\mathcal{B}$ is feasible with $\sigma_\mathcal{B} = p_\mathcal{B} = 0$. Therefore, it is guaranteed that $\alpha \geq \alpha^{\mathcal{A}}$. In addition, the constraints ensure that the invariant set defined by a solution to optimization problem $\mathcal{B}$ satisfies $\mathcal{Q}^{\mathcal{A}} \subseteq \mathcal{Q}^{\mathcal{B}}$, since for any solution they guarantee
\begin{equation}
    \sigma_q\left(\zeta(x)\right) \leq \sigma_q^{\mathcal{A}}\left(\zeta(x)\right) \quad \forall x \in \mathcal{Q}^{\mathcal{A}}.
\end{equation}
This simultaneously guarantees that $\sigma_q$ satisfies the constraint of \eqref{eq:InvariantLocalAsymptoticStabilityPqZeroConstraint}, by non-negativity of $\sigma_q$  and $\sigma_q^{\mathcal{A}}(0) = 0$.

Thus, given an initial solution to optimization problem \labelcref{eq:OptimProblemFormulationInvariantLocalAsymptoticStability}, now consider the sequence of optimization problems obtained by solving optimization problems $\mathcal{A}$ and $\mathcal{B}$ in an alternating fashion, whereby the values of variables to be fixed in each problem are taken from the solution to the previous problem in the sequence. From the above, it holds that:
\begin{enumerate}
    \item All optimizations problems in the sequence are feasible.
    \item The solution to any optimization in the sequence also defines a solution to optimization problem \labelcref{eq:OptimProblemFormulationInvariantLocalAsymptoticStability}, forming a valid local stability certificate via \cref{thm:ValidityOptimProblemInvariantLocalAsymptoticStability}.
    \item The invariant sets $\mathcal{Q}$ defined by the solutions to the problems in the sequence are non-decreasing. 
\end{enumerate}
A systematic algorithm that addresses the drawbacks identified at the beginning of this Section by solving optimization problem \labelcref{eq:OptimProblemFormulationInvariantLocalAsymptoticStability} sequentially can now be presented.

\subsubsection{Algorithmic Implementation}
The developments of this Section are combined, resulting in a novel algorithm enabling a systematic examination of the local stability properties of closed-loop system \labelcref{eq:GeneralClosedLoopSystem}. This algorithm is shown in pseudocode in \cref{alg:AlternatingLocalStability}.

\begin{algorithm}[t]
    \caption{Sequential Local Stability Analysis}
    \label{alg:AlternatingLocalStability}
    \begin{algorithmic}
        \REQUIRE Semialgebraic sets $\mathbf{K}_\varphi$, $\mathbf{K}_L$. 
        \ENSURE $\exists P \succ 0$, $\exists \alpha^{\mathcal{A}}_0 \geq \epsilon_{\textrm{prec}}$, $\Delta \alpha_{\,\text{LS}} > 0$ such that: 
        \begin{enumerate}
            \renewcommand{\theenumi}{\roman{enumi}}
            \renewcommand{\labelenumi}{(\theenumi)}
            \item $P - A_{\text{lin}}\transpose P A_{\text{lin}} \succ 0$,
            \item \acrshort{SDP} $\mathcal{A}$ is feasible for $(\alpha, \sigma_q) = (\alpha^{\mathcal{A}}_0, \, x\transpose P x)$.
        \end{enumerate}
        \STATE $(\sigma_q, \Delta \alpha, \alpha^{\mathcal{B}}) \gets (x\transpose P x, \infty, \alpha_0^{\mathcal{A}})$
        \WHILE{$\Delta \alpha > \delta$}
        \STATE $\alpha^{\mathcal{B}}_{\text{old}} \gets \alpha^{\mathcal{B}}$
        \STATE $(\alpha^{\mathcal{A}}, \, \sigma^{\Delta V, \, \mathcal{A}}_{\textrm{ineq}, \, q}, \sigma^{\Delta \mathcal{Q}, \, \mathcal{A}}_{\textrm{ineq}, \, q} ) \gets \text{LineSearch}\left(\alpha^{\mathcal{B}}, \, \alpha^{\mathcal{B}} + \Delta \alpha_{\,\text{LS}},\right.$ 
        \STATE \hfill $\left. \text{SolveSDP}(\mathcal{A}\, ; \sigma_q)\right)$ 
        \STATE $(\alpha^B, \, \sigma_q, \, V) \gets \text{LineSearch}(\alpha^{\mathcal{A}},  \, \alpha^{\mathcal{A}} + \Delta \alpha_{\,\text{LS}},$ 
        \STATE  \hfill $\text{SolveSDP}(\mathcal{B}\, ; \sigma^{\Delta V, \, \mathcal{A}}_{\textrm{ineq}, \, q}, \, \sigma^{\Delta \mathcal{Q}, \, \mathcal{A}}_{\textrm{ineq}, \, q}))$
        \STATE $\Delta \alpha \gets \alpha^{\mathcal{B}} - \alpha^{\mathcal{B}}_{\text{old}}$
        \ENDWHILE
        \\ \textbf{Return:} $(V, \alpha^{\mathcal{B}}, \sigma_q)$
    \end{algorithmic}
\end{algorithm}
Here, $\text{LineSearch}(\alpha, \, \beta, \text{r})$ denotes a routine applying a line search in the interval $[\alpha, \beta]$ to the routine $\text{r}$, $\text{SolveSDP}\left(\mathcal{X}; (\cdot)\right)$ is a routine determining the feasibility and/or solution of \acrshort{SDP} $\mathcal{X}$ using the values provided in the second argument, $\epsilon_{\textrm{prec}}$ represents a positive scalar value greater than the numerical precision of the \acrshort{SDP} solver, and $\Delta \alpha_{\text{LS}}$ controls how much $\alpha$ may increase in one iteration of the LineSearch routine. 

From the preceding discussion, it is clear \Cref{alg:AlternatingLocalStability} addresses the shortcomings identified at the beginning of this section, as it:
\begin{enumerate}
    \item systematically updates the set $\mathcal{Q}$ and certifies directly at each iteration that this set forms part of the closed-loop system's \acrshort{ROA}.
    \item guarantees the set $\mathcal{Q}$ does not shrink during each iteration.
    \item provides a natural termination criteria for the optimization problem in the form of the increase in the value of $\alpha$ after each iteration of the while loop.
\end{enumerate}
A range of variations of this algorithm may also be implemented, e.g. exiting the while loop after a set number of iterations.

\section{Numerical Results}
\label{sec:NumericalResults}
The contributions of \cref{sec:ModelingContributions,sec:StabilityVerificationContributions} are demonstrated in the two following numerical examples, respectively. Both examples are solved using the SOSTOOLS \cite{mybibfile:SOSTOOLS} and MOSEK \cite{mybibfile:MOSEK} optimization toolboxes. \acrshortpl{SDP} are solved up to the default numerical tolerance of $\num{1e-6}$. The neural network of \cref{sec:MPCImitation} is trained using the Adam optimizer \cite{mybibfile:Kingma2017}.

\subsection{Implicit Mass-Spring-Damper System}
\label{sec:ImplicitSpringMassDamper}
Consider a mass-spring-damper system with a bounded input and nonlinear, bounded damping force described by 
\begin{equation}
    m\ddot{q} = -kq -d_2\hat{\lambda}_{\text{tanh}}(d_1\dot{q}) + \text{sat}(u),
\end{equation}
with saturation achieved at $-1$ and $1$, $m = \qty{1}{\kilogram}$, $k = \qty{0.5}{\newton\per\meter}$, $c_{\text{tanh}} = \num{1}$, $d_1 = \num{0.1}$, $d_2 = \qty{0.5}{\newton\second\per\meter}$. Define the continuous-time state $z\transpose = [q, \dot{q}]\transpose$ and $f$ such that $\dot{z} = f(z,u)$.
Next, a discrete-time system with state $x$ is obtained by approximating the continuous-time solutions via a backward euler discretization at a sampling rate of $T_s = \qty{0.05}{\second}$, 
\begin{equation}
    x^+ = x + T_s f(x^+, u^+).
\end{equation}
In addition, consider the \acrshort{LQR} state-feedback controller $K = - \vstretch{1.3}{[} \, \num{0.0714}, \ \num{0.742} \vstretch{1.3}{]}$
designed using the approximate system dynamics
\begin{multline}
    x^+ \approx \bigg( I - \dpd{f(x,u)}{x} \bigg|_{\begin{subarray}{l} x=0 \\
    u = 0 \\ \phantom{a} \phantom{a} \end{subarray}} \bigg)^{-1} x \, + \\
    \int_0^{T_s} \bigg( I - \dpd{f(x,u)}{x} \bigg|_{\begin{subarray}{l} x=0 \\
    u = 0 \\ \phantom{a} \end{subarray}} \bigg)^{-1} \big(T_s - \tau\big) \dif \tau \, \bigg( \dpd{f(x,u)}{u} \bigg|_{\begin{subarray}{l} x=0 \\
    u = 0 \\ \phantom{a} \end{subarray}}\bigg) u,
\end{multline}
$Q = \text{diag}(1, 5)$ and $R = 10$. Finally, let $\hat{K} = \vstretch{1.3}{[} \, \hat{k}_1 \ \hat{k}_2 \vstretch{1.3}{]} = 2\sqrt{c_{\text{sp}} + \frac{1}{4}} \, K = - \vstretch{1.3}{[} \, \num{0.122}, \ \num{1.27} \vstretch{1.3}{]}$ and define the control law
\begin{equation}
    u(x) = \hat{\lambda}_{\text{sp}}(\hat{K}x+1) - \hat{\lambda}_{\text{sp}}(\hat{K}x-1) - 1,
\end{equation}
with $c_{\text{sp}} = \ln{(2)}^2$ such that $u(x) \in [-1, 1]$ for all $x \in \realsN{2}$ and $\od{u(x)}{x} \big|_{\begin{subarray}{l} x = 0_{\phantom{l}} \\  \end{subarray}} = K$. 

This is an implicit system utilizing the semialgebraic activation functions of \cref{sec:ModelingContributions}. To apply the stability analysis framework, this closed-loop system is transformed into an interconnection of an implicit neural network and a system with polynomial dynamics, as shown in \cref{subfig:PostRENTransformation}, by considering $\tilde{x} = x$, $\tilde{f}(\tilde{x}, w_\varphi) = [w_{\varphi,1}, w_{\varphi,2}]\transpose$ and, following \cref{eq:ImplicitNN}, 
\begin{subequations}
    \begin{alignat}{3}
        w_{\varphi,1} &= & \text{id} \Big( & \begin{bNiceArray}{wc{0.2cm}@{\hskip 1pt}wc{0.275cm}@{\hskip 3pt}} 1 & 0 \end{bNiceArray} \tilde{x} + && \begin{bNiceArray}{wc{0.46cm}@{\hskip 3pt}wc{0.4cm}@{\hskip 3pt}wc{0.5cm}@{\hskip 4pt}wc{0.4cm}@{\hskip 1pt}wc{0.4cm}@{\hskip 4.5pt}}
            0 & T_s & 0 & 0 & 0
        \end{bNiceArray} 
        w_\varphi \Big), \\
        w_{\varphi,2} &= & \text{id} \Big( & \begin{bNiceArray}{wc{0.2cm}@{\hskip 1pt}wc{0.275cm}@{\hskip 3pt}} 0 & 1 \end{bNiceArray} \tilde{x} + && \begin{bNiceArray}{wc{0.46cm}@{\hskip 3pt}wc{0.4cm}@{\hskip 3pt}wc{0.5cm}@{\hskip 4pt}wc{0.4cm}@{\hskip 0.5pt}wc{0.4cm}@{\hskip 4.5pt}}
            \frac{-T_s k}{m} & 0 & \frac{-T_s d_2}{m} & \frac{T_s}{m} & \frac{-T_s}{m} 
        \end{bNiceArray} 
        w_\varphi - \frac{T_s}{m} \Big), \\
        w_{\varphi,3} &= & \, \hat{\lambda}_{\text{tanh}}\Big( &
         \begin{bNiceArray}{wc{0.2cm}@{\hskip 1pt}wc{0.275cm}@{\hskip 3pt}} 0 & 0 \end{bNiceArray} \tilde{x} +  && \begin{bNiceArray}{wc{0.46cm}@{\hskip 3pt}wc{0.4cm}@{\hskip 3pt}wc{0.5cm}@{\hskip 4pt}wc{0.4cm}@{\hskip 1pt}wc{0.4cm}@{\hskip 4.5pt}}
            0 & d_1 & 0 & 0 & 0
        \end{bNiceArray} 
        w_\varphi
        \Big)  , \\
        w_{\varphi,4} &= & \, \hat{\lambda}_{\text{sp}}\Big( &
         \begin{bNiceArray}{wc{0.2cm}@{\hskip 1pt}wc{0.275cm}@{\hskip 3pt}} 0 & 0 \end{bNiceArray} \tilde{x} +  && \begin{bNiceArray}{wc{0.46cm}@{\hskip 3pt}wc{0.4cm}@{\hskip 3pt}wc{0.5cm}@{\hskip 4pt}wc{0.4cm}@{\hskip 1pt}wc{0.4cm}@{\hskip 4.5pt}}
            \hat{k}_1 & \hat{k}_2 & 0 & 0 & 0
        \end{bNiceArray} 
        w_\varphi
        + 1\Big) , \\
        w_{\varphi,5} &= & \, \hat{\lambda}_{\text{sp}}\Big( &
         \begin{bNiceArray}{wc{0.2cm}@{\hskip 1pt}wc{0.275cm}@{\hskip 3pt}} 0 & 0 \end{bNiceArray} \tilde{x} +  && \begin{bNiceArray}{wc{0.46cm}@{\hskip 3pt}wc{0.4cm}@{\hskip 3pt}wc{0.5cm}@{\hskip 4pt}wc{0.4cm}@{\hskip 1pt}wc{0.4cm}@{\hskip 4.5pt}}
            \hat{k}_1 & \hat{k}_2 & 0 & 0 & 0
        \end{bNiceArray} 
        w_\varphi
         - 1\Big).
    \end{alignat}
\end{subequations}
Using the definition of $D_{11}$ from \cref{eq:ImplicitNN}, this implicit network is Lipschitz continuous and well-posed since $2I - D_{11} - D_{11}\transpose \succ 0$ \cite{mybibfile:Revay2020_EquilibriumNet}. 

Finally, using the construction of \cref{thm:RENSemialgebraicSet}, the set description of the implicit neural network's graph is constructed. Following the procedure of \cref{sec:OrigGlobalStability}, an \acrshort{SDP} is set up searching for a quartic Lyapunov function in $x$, $\tilde{V}(x) = \nu(x)\transpose P \nu(x)$ with $\nu(x) =  \vstretch{1.3}{[} 1, \, x_1, \, x_2, \, x_1^2, \, x_1x_2, \, x_2^2 \vstretch{1.3}{]}$. \acrshort{SDP} \labelcref{eq:SDPFormulationGlobalAsymptoticStability} is feasible, verifying that the closed-loop system is \acrshort{GAS}. A phase portrait as well as several level sets of the Lyapunov function with $P =$

\begin{equation*}
        \! 
        \begingroup
            \sisetup{print-exponent-implicit-plus=true}
            \sisetup{tight-spacing=true}
            \sisetup{retain-zero-exponent=false}
            \setlength{\arraycolsep}{2pt}
            \begin{bNiceMatrix}
                0 & \num{9.10e-8}   & -\num{1.31e-7}            & \num{-3.00e-6}            & \num{-5.05e-6}            & \num{-7.59e-7} \\
                * & \num{1.29e3}    & \phantom{-}\num{6.24e2}   & \phantom{-}\num{2.49e1}   & \num{-1.42e1}      & \phantom{-}\num{3.47}e\!+\!0  \\
                * & *               & \phantom{-}\num{1.60e3}   & \phantom{-}\num{1.26e1}   & \phantom{-}\num{2.92e1}   & \num{-8.63}e\!+\!0  \\
                * & *               & *                         & \phantom{-}\num{1.42e2}   & \phantom{-}\num{3.22e-6}  & \phantom{-}\num{3.96e1} \\
                * & *               & *                         & *                         & \phantom{-}\num{4.80e2}   & \phantom{-}\num{2.56e0}e\!+\!0  \\
                * & *               & *                         & *                         & *                         & \phantom{-}\num{5.67e2}
            \end{bNiceMatrix}
        \endgroup
\end{equation*}
are shown in \cref{fig:ImplicitSpringMassDamperLyapunovContour}. 

\begin{figure}[t]
    \begin{center}
    \input{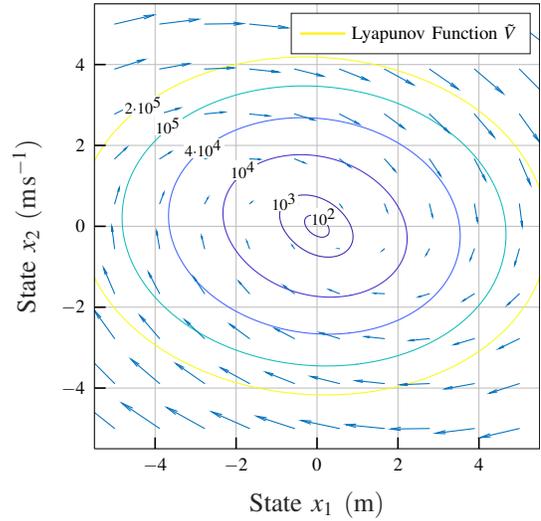}
    \end{center}
    \caption{Phase portrait of the closed-loop system under the neural network control law of \cref{sec:ImplicitSpringMassDamper}, shown in blue. Six level sets of the Lyapunov function $\tilde{V}$ solving \acrshort{SDP} \labelcref{eq:SDPFormulationGlobalAsymptoticStability} are shown.}
    \label{fig:ImplicitSpringMassDamperLyapunovContour}
\end{figure}

\begin{figure}[t]
    \begin{center}
    \input{TubeMPC1qRegion.tex}
    \end{center}
    \caption{Phase portrait of the closed-loop system under the neural-network control law of \cref{sec:MPCImitation}, shown in blue. Overlaid on top are the boundaries of the set $\mathcal{Q}^{\mathcal{A}}$ obtained after iteration $i$ of \cref{alg:AlternatingLocalStability}, and the boundary of the feasible set for the \acrshort{MPC} controller defined by optimization problem \labelcref{eq:LocallyLinearStableSystemTubeBasedMPCOptimizationProblem}.}
    \label{fig:TubeMPC1qRegion}
\end{figure}

\subsection{\acrfull{MPC} Imitation}
\label{sec:MPCImitation}
Consider the following saturated, discrete-time \acrshort{LTI} system
\begin{equation}
    \label{eq:LocallyStableLinearSystemOpenLoopDefinition}
    x^+ = Ax + B\text{sat}(u) = \begin{bmatrix} 1 & 0.1 \\ 0 & 1.05 \end{bmatrix} x + \begin{bmatrix} 0 \\ 0.1 \end{bmatrix} \text{sat}(u),   
\end{equation}
with saturation achieved at $1$ and $-1$. It is immediately clear that there exists no globally stabilizing controller and therefore a local stability analysis is required as 
\begin{alignat}{1}
	|x_{2}^+| &= |1.05 x_{2} + 0.1 \text{sat}(u)|,  \\
	 		&\geq 1.05 |x_{2}| - 0.1
\end{alignat}
for all $|x_2| \geq \frac{0.1}{1.05}$, which shows that $\{ x \in \realsN{2} \mid |x_2| \geq 2 \}$ is invariant for any control law $u(x)$. Next, consider a neural network trained to imitate an \acrshort{MPC} controller,
\begin{equation}
    \label{eq:LocallyLinearStableSystemNeuralNetMPCApproximation}
    x^+ = Ax + B\text{sat}\left(\varphi(x)\right) = Ax + B\left(u_{\text{MPC}}(x) + w\right),
\end{equation}
where $w \in \mathcal{W}$ captures any error between the \acrshort{MPC} controller and a neural-network-based controller. Assume $\mathcal{W} = \{ w \in \realsN{} \mid \|w\|_{\infty} \leq 0.1\}$ and let $u_{\text{MPC}}$ be a tube-based robust \acrshort{MPC} controller such that $u_{\text{MPC}}(x) = K_{\text{tube}}(x - z_1) + v_1$ with $z_1$ and $v_1$ the solutions to
\begin{subequations}
	\label{eq:LocallyLinearStableSystemTubeBasedMPCOptimizationProblem}
	\begin{alignat}{6}
		&& \underset{ \substack{\{z_i\}_{i = \{1, \dotsc, N+1\}} \\ \{v_i\}_{i = \{1, \dotsc, N\} }}}{\textrm{minimize:}} \ && 	\sum_{i=1}^{N} z_i\transpose Q z_i + v_i\transpose R v_i + z_{N+1}\transpose P z_{N+1} \span \span \span \span \span \nonumber \\
		&& \text{s.t.} \span \ & z_{i+1} &= \ && Az_{i}+Bv_{i}, \quad && \forall i\in [N], \label{eq:TubeMPC1NominalDynamicsConstraint}
		\\
		&& 				   && v_{i} &\in \ && \mathcal{U} \ominus K_{\text{tube}} \mathcal{E}, \quad && \forall i\in [N], \label{eq:LocallyLinearStableSystemTubeBasedMPCTightenedInputConstraints}
		\\
		&& 				   && z_{1} &\in \ && x \oplus \mathcal{E}, \quad &&	\\
		&& 				   && z_{N+1} &\in \ && \mathcal{X}_f, \quad &&
	\end{alignat}
\end{subequations}
with $N = 10$, $Q = 5I$, $R = 1$, $\mathcal{U} = [-1, 1]$. In addition, $K_\text{tube}$ and $P$ are equal to the \acrshort{LQR} state-feedback controller and corresponding solution of the discrete-time algebraic Ricatti equation for this choice of $Q$ and $R$, $\mathcal{X}_f$ is the maximal positive invariant set under the aforementioned \acrshort{LQR} controller and tightened input constraints of \cref{eq:LocallyLinearStableSystemTubeBasedMPCTightenedInputConstraints}, and $\mathcal{E}$ is a finite, polytopic approximation of the minimum robust positive invariant set $\oplus_{i=0}^{\infty} (A+BK_{\text{tube}})^iB\mathcal{W}$ \cite{mybibfile:Rakovic2005}. 

A \acrshort{ReLU} neural network consisting of $2$ hidden layers made up of $5$ neurons each is trained in a supervised learning setting using 62500 samples from the above \acrshort{MPC} controller. Next, the weights and biases of the output layer are adjusted to ensure the origin is an equilibrium of the closed-loop system via
\begin{subequations}
	\label{eq:LocallyLinearStableSystemTubeBasedMPCRemoveOffsetOptimizationProblem}
	\begin{alignat}{5}
		&& \underset{ \substack{W^{\textrm{new}}_3, \, b^{\textrm{new}}_3 } }{\textrm{minimize:}} \ && 	\|W^{\textrm{new}}_3 - W_3\|_{\infty} +  \|b^{\textrm{new}}_3 - b_3\|_{\infty} \span \span \span \span \span \nonumber \\
		&& \text{s.t.} \span \ & {W^{\textrm{new}}_3} \lambda_{0} + b^{\textrm{new}}_3 &= 0, &&  &&
	\end{alignat}
\end{subequations}
with $\lambda_0$ equal to the output of the final hidden layer for $x = 0$. Finally, following \cref{eq:LocallyLinearStableSystemNeuralNetMPCApproximation}, two additional \acrshort{ReLU} neurons are added to saturate the controller output between $-1$ and $1$. 

The \acrshort{ROA} of the closed-loop system obtained with this controller is estimated using \cref{alg:AlternatingLocalStability}. An initial solution is found by solving $P - A_{\text{lin}}\transpose P A_{\text{lin}} = I$ and $\alpha_0^{\mathcal{A}} = 0.1$. Parameter $\Delta \alpha_{\text{LS}}$ is set to $1$ and the bisection method is used as a line search. All \acrshortpl{SDP} are set up to use a sparse selection of sixth order polynomials. To minimize the size of the positive semidefinite constraint of \cref{eq:InvariantLocalAsymptoticStabilityQbSupersetConstraint}, $\sigma_q$ is defined to be a fourth order \acrshort{SOS} polynomial in $x$, $\sigma_q(x) = \nu(x)\transpose Q \nu(x)$ with $\nu(x) =  \vstretch{1.3}{[} 1, \, x_1, \, x_2, \, x_1^2, \, x_1x_2, \, x_2^2 \vstretch{1.3}{]}$. The termination criteria for each bisection and the algorithm as a whole are set to a relative change in $\alpha$ less than $0.1\%$ or a maximum of $15$ iterations. With these settings \cref{alg:AlternatingLocalStability} terminates after $7$ iterations, with optimization problem $\mathcal{B}$ reported infeasible in the final iteration as a result of the specified numerical tolerances. 

The results prove the closed-loop system is \acrshort{LAS}. In addition, an estimate of the closed-loop system's \acrshort{ROA} is obtained, with the output of \cref{alg:AlternatingLocalStability} being $\sigma_q$ defined by $Q=$

\begin{equation*}
        \! 
        \begingroup
            \sisetup{print-exponent-implicit-plus=true}
            \sisetup{tight-spacing=true}
            \sisetup{retain-zero-exponent=false}
            \setlength{\arraycolsep}{1.3pt}
            \begin{bNiceMatrix}
                0 & -\num{5.13e-7}              & -\num{1.56e-6}            & \num{-1.10e-5}                & \num{-3.81e-6}                & \num{-4.03e-6}  \\
                * & \phantom{-}\num{5.88e-3}    & \phantom{-}\num{3.23e-2}  & -\num{1.18e-1}                & \num{-5.72e-2}                & -\num{7.15e-3}  \\
                * & *                           & \phantom{-}\num{2.01e-1}  & -\num{4.81e-1}                & -\num{3.23e-1}                & \num{-4.82e-2}  \\
                * & *                           & *                         & \phantom{-}\num{4.01}e\!+\!0  & \phantom{-}\num{1.94}e\!+\!0  & \phantom{-}\num{4.62e-1} \\
                * & *                           & *                         & *                             & \phantom{-}\num{6.96}e\!+\!0  & \phantom{-}\num{3.24}e\!+\!0  \\
                * & *                           & *                         & *                             & *                             & \phantom{-}\num{3.49}e\!+\!0  \\
            \end{bNiceMatrix}
        \endgroup
\end{equation*}
and $\alpha^\mathcal{B} = 2.84$.

\Cref{fig:TubeMPC1qRegion} contains a phase portrait of the closed-loop system in blue, the boundary of the set $\mathcal{Q}^{\mathcal{A}}$ validated to form part of the closed-loop system's \acrshort{ROA} after every iteration of solving optimization problem $\mathcal{A}$, and the boundary of the feasible set of the tube-based robust \acrshort{MPC} controller used to generate the training data. 

These results show that, as expected, the sets $\mathcal{Q}^{\mathcal{A}}$ are non-decreasing between every iteration of the algorithm. The final set $\mathcal{Q}$ verified to form part of the \acrshort{ROA} is smaller in volume than the \acrshort{MPC} controller's feasible set, but interestingly also proves that points outside of the \acrshort{MPC} controller's feasible set converge to the origin. As \cref{fig:TubeMPC1qRegion} suggests, this behavior is most likely the result of the initial solution used in \cref{alg:AlternatingLocalStability}, which was selected without incorporating prior knowledge of the closed-loop system to demonstrate how \cref{alg:AlternatingLocalStability} can be used without such knowledge. 

\section{Conclusion}
\label{sec:Conclusion}
This work presents several contributions to the sum-of-squares stability framework for neural-network-based controllers: two new semialgebraic activation functions, a proof of compatibility of control systems incorporating \acrshortpl{REN}, and two new optimization problems each simplifying the analysis of local stability properties and reducing the reliance on prior system knowledge. 

To further improve the stability verification framework, future work should focus on the potential applications of the newly introduced semialgebraic activation functions to approximate and provide stability guarantees for previously trained neural networks incorporating $\text{tanh}$ and $\text{softplus}$ activation functions, more closely examining the relation between the initial solution and the output of \cref{alg:AlternatingLocalStability}, and a further investigation of the polynomial terms used in the formulation of the \acrshortpl{SDP} to balance computational requirements and expressive capability.

\section*{Acknowledgment}
We acknowledge Dalim Wahby for fruitful discussions and meaningful inputs to this work.

\section*{References}
\bibliographystyle{IEEEtran}
\renewcommand{\section}[2]{}%
\bibliography{IEEEabrv,mybibfile}

\begin{thebibliography}{10}
\providecommand{\url}[1]{#1}
\csname url@rmstyle\endcsname
\providecommand{\newblock}{\relax}
\providecommand{\bibinfo}[2]{#2}
\providecommand\BIBentrySTDinterwordspacing{\spaceskip=0pt\relax}
\providecommand\BIBentryALTinterwordstretchfactor{4}
\providecommand\BIBentryALTinterwordspacing{\spaceskip=\fontdimen2\font plus
\BIBentryALTinterwordstretchfactor\fontdimen3\font minus
  \fontdimen4\font\relax}
\providecommand\BIBforeignlanguage[2]{{%
\expandafter\ifx\csname l@#1\endcsname\relax
\typeout{** WARNING: IEEEtran.bst: No hyphenation pattern has been}%
\typeout{** loaded for the language `#1'. Using the pattern for}%
\typeout{** the default language instead.}%
\else
\language=\csname l@#1\endcsname
\fi
#2}}

\bibitem{mybibfile:Hunt1992}
K.~J. Hunt, D.~Sbarbaro, R.~{\.Z}bikowski, and P.~J. Gawthrop,
  ``\BIBforeignlanguage{en}{Neural networks for control systems---a survey},''
  \emph{\BIBforeignlanguage{en}{Automatica (Oxf.)}}, vol.~28, no.~6, pp.
  1083--1112, Nov. 1992.

\bibitem{mybibfile:Hanin2019}
B.~Hanin, ``\BIBforeignlanguage{en}{Universal function approximation by deep
  neural nets with bounded width and {ReLU} activations},''
  \emph{\BIBforeignlanguage{en}{Mathematics}}, vol.~7, no.~10, p. 992, Oct.
  2019.

\bibitem{mybibfile:Gonzalez2024}
\BIBentryALTinterwordspacing
C.~Gonzalez, H.~Asadi, L.~Kooijman, and C.~P. Lim, ``Neural networks for fast
  optimisation in model predictive control: A review,'' arXiv preprint, 2024.
  [Online]. Available: \url{https://arxiv.org/abs/2309.02668}
\BIBentrySTDinterwordspacing

\bibitem{mybibfile:Norris2021}
G.~Norris, G.~Ducard, and C.~Onder, ``Neural networks for control: A tutorial
  and survey of stability-analysis methods, properties, and discussions,'' in
  \emph{2021 International Conference on Electrical, Computer, Communications
  and Mechatronics Engineering ({ICECCME})}, Mauritius, Mauritius, Oct. 2021,
  pp. 1--6.

\bibitem{mybibfile:Liberzon2002}
D.~Liberzon, \emph{Switching in systems and control}, ser. Systems \& Control:
  Foundations \& Applications.\hskip 1em plus 0.5em minus 0.4em\relax New York,
  NY: Springer, 2003.

\bibitem{mybibfile:Blondel1999}
V.~D. Blondel and J.~N. Tsitsiklis, ``Complexity of stability and
  controllability of elementary hybrid systems,'' \emph{Automatica (Oxf.)},
  vol.~35, pp. 479--489, Mar. 1999.

\bibitem{mybibfile:Korda2022}
M.~Korda, ``Stability and performance verification of dynamical systems
  controlled by neural networks: Algorithms and complexity,'' \emph{IEEE
  Control Systems Letters}, vol.~6, pp. 3265--3270, June 2022.

\bibitem{mybibfile:Hu2020}
H.~Hu, M.~Fazlyab, M.~Morari, and G.~J. Pappas, ``{Reach-SDP}: Reachability
  analysis of closed-loop systems with neural network controllers via
  semidefinite programming,'' in \emph{2020 59th {IEEE} Conference on Decision
  and Control ({CDC})}, Jeju, Korea (South), Dec. 2020, pp. 5929--5934.

\bibitem{mybibfile:Dubach2022}
M.~Dubach and G.~Ducard, ``A comparison of verification methods for
  neural-network controllers using mixed-integer programs,'' in \emph{2022 7th
  International Conference on Robotics and Automation Engineering ({ICRAE})},
  Singapore, Nov. 2022, pp. 43--48.

\bibitem{mybibfile:Richardson2023}
C.~R. Richardson, M.~C. Turner, and S.~R. Gunn, ``Strengthened circle and
  {Popov} criteria for the stability analysis of feedback systems with {ReLU}
  neural networks,'' \emph{IEEE Control Systems Letters}, vol.~7, pp.
  2635--2640, June 2023.

\bibitem{mybibfile:Simon2016}
D.~Simon and J.~Lofberg, ``Stability analysis of model predictive controllers
  using mixed integer linear programming,'' in \emph{2016 {IEEE} 55th
  Conference on Decision and Control ({CDC})}, Las Vegas, NV, USA, Dec. 2016,
  pp. 7270--7275.

\bibitem{mybibfile:Schwan2023}
R.~Schwan, C.~N. Jones, and D.~Kuhn, ``Stability verification of neural network
  controllers using mixed-integer programming,'' \emph{IEEE Transactions on
  Automatic Control}, vol.~68, pp. 7514--7529, June 2023.

\bibitem{mybibfile:Pauli2021}
P.~Pauli, D.~Gramlich, J.~Berberich, and F.~Allgower, ``Linear systems with
  neural network nonlinearities: Improved stability analysis via acausal
  zames-falb multipliers,'' in \emph{2021 60th IEEE Conference on Decision and
  Control (CDC)}, Austin, TX, USA, Feb. 2022, pp. 3611--3618.

\bibitem{mybibfile:Yin2022}
H.~Yin, P.~Seiler, and M.~Arcak, ``Stability analysis using quadratic
  constraints for systems with neural network controllers,'' \emph{IEEE
  Transactions on Automatic Control}, vol.~67, pp. 1980--1987, Apr. 2022.

\bibitem{mybibfile:Revay2020}
M.~Revay, R.~Wang, and I.~R. Manchester, ``A convex parameterization of robust
  recurrent neural networks,'' \emph{IEEE Control Systems Letters}, vol.~5, pp.
  1363--1368, Nov. 2020.

\bibitem{mybibfile:Newton2022}
M.~Newton and A.~Papachristodoulou, ``Stability of non-linear neural feedback
  loops using sum of squares,'' in \emph{2022 {IEEE} 61st Conference on
  Decision and Control ({CDC})}, Cancun, Mexico, Dec. 2022, pp. 6000--6005.

\bibitem{mybibfile:Korda2017}
M.~Korda and C.~N. Jones, ``Stability and performance verification of
  optimization-based controllers,'' \emph{Automatica (Oxf.)}, vol.~78, pp.
  34--45, Jan. 2017.

\bibitem{mybibfile:Revay2024}
M.~Revay, R.~Wang, and I.~R. Manchester, ``Recurrent equilibrium networks:
  Flexible dynamic models with guaranteed stability and robustness,''
  \emph{IEEE Transactions on Automatic Control}, vol.~69, pp. 2855--2870, May
  2024.

\bibitem{mybibfile:Newton2021}
M.~Newton and A.~Papachristodoulou, ``Neural network verification using
  polynomial optimisation,'' in \emph{2021 60th IEEE Conference on Decision and
  Control (CDC)}, Austin, TX, USA, Dec. 2021, pp. 5092--5097.

\bibitem{mybibfile:Parrilo2003}
P.~A. Parrilo, ``Semidefinite programming relaxations for semialgebraic
  problems,'' \emph{Mathematical Programming}, vol.~96, pp. 293--320, May 2003.

\bibitem{mybibfile:Revay2020_EquilibriumNet}
\BIBentryALTinterwordspacing
M.~Revay, R.~Wang, and I.~R. Manchester, ``Lipschitz bounded equilibrium
  networks,'' arXiv, 2020. [Online]. Available:
  \url{https://arxiv.org/abs/2010.01732}
\BIBentrySTDinterwordspacing

\bibitem{mybibfile:Rawlings2017}
J.~B. Rawlings, D.~Q. Mayne, and M.~Diehl, \emph{Model predictive control:
  Theory, Computation, and Design}.\hskip 1em plus 0.5em minus 0.4em\relax
  Madison, WI: Nob Hill Publishing, 2017.

\bibitem{mybibfile:Kalman1960_CT}
R.~E. Kalman and J.~E. Bertram, ``Control system analysis and design via the
  'second method' of {Lyapunov},'' \emph{J. Basic Eng.}, vol.~82, pp. 371--393,
  June 1960.

\bibitem{mybibfile:Agrawal2017}
A.~Agrawal and K.~Sreenath, ``Discrete control barrier functions for
  safety-critical control of discrete systems with application to bipedal robot
  navigation,'' in \emph{Proceedings of Robotics: Science and Systems},
  Cambridge, Massachusetts, July 2017.

\bibitem{mybibfile:Valmorbida2014}
G.~Valmorbida and J.~Anderson, ``Region of attraction analysis via invariant
  sets,'' in \emph{2014 American Control Conference}, Portland, Oregon, USA,
  June 2014.

\bibitem{mybibfile:Khalil1996}
H.~K. Khalil, \emph{Nonlinear Systems}.\hskip 1em plus 0.5em minus 0.4em\relax
  Upper Saddle River, NJ: Pearson, 1996.

\bibitem{mybibfile:SOSTOOLS}
A.~Papachristodoulou, J.~Anderson, G.~Valmorbida, S.~Prajna, P.~Seiler, P.~A.
  Parrilo, M.~M. Peet, and D.~Jagt, \emph{{SOSTOOLS}: Sum of squares
  optimization toolbox for {MATLAB}}, http://arxiv.org/abs/1310.4716, 2021,
  available from {https://github.com/oxfordcontrol/SOSTOOLS}.

\bibitem{mybibfile:MOSEK}
\emph{MOSEK Optimization Toolbox for MATLAB - Release 10.0.47}, MOSEK ApS,
  MOSEK ApS, Fruebjergvej 3, Symbion Science Park, Box 16, 2100 Copenhagen O,
  Denmark, 2023.

\bibitem{mybibfile:Kingma2017}
\BIBentryALTinterwordspacing
D.~P. Kingma and J.~Ba, ``Adam: A method for stochastic optimization,'' 2017.
  [Online]. Available: \url{https://arxiv.org/abs/1412.6980}
\BIBentrySTDinterwordspacing

\bibitem{mybibfile:Rakovic2005}
S.~V. Rakovic, E.~C. Kerrigan, K.~I. Kouramas, and D.~Q. Mayne, ``Invariant
  approximations of the minimal robust positively invariant set,'' \emph{IEEE
  Transactions on Automatic Control}, vol.~50, pp. 406--410, Mar. 2005.

\end{thebibliography}
\vskip -2\baselineskip plus -1fil
\begin{IEEEbiographynophoto}{Alvaro Detailleur}
received the master's degree in Robotics, Systems and Control from ETH Z{\"u}rich, Z{\"u}rich, Switzerland in 2024, focusing on system modeling and model-based control design. 

He has completed industrial internships at Forze Hydrogen Racing and Mercedes-AMG High Performance Powertrains, focusing on the software and control of high performance automotive power units. His research interests include nonlinear control, neural-network-based controllers and optimization-based control design with a practical application.

Mr. Detailleur was awarded the Young Talent Development Prize by the Royal Dutch Academy of Sciences (KHMW).
\end{IEEEbiographynophoto}
\vskip -2\baselineskip plus -1fil
\begin{IEEEbiographynophoto}{Guillaume J. J. Ducard} (Senior Member, IEEE), received the master’s degree in electrical engineering and the
Doctoral degree focusing on flight control for unmanned aerial vehicles (UAVs) from ETH
Z{\"u}rich, Z{\"u}rich, Switzerland, in 2004 and 2007, respectively.

He completed his two-year Postdoctoral course in 2009 from ETH Z{\"u}rich, focused
on flight control for UAVs. He is currently an Associate Professor with the
Universit{\'e} C\^{o}te d`Azur, France, and guest scientist with ETH Z{\"u}rich. His research interests include nonlinear control, neural networks, estimation, and guidance mostly applied to UAVs.
\end{IEEEbiographynophoto}
\vskip -2\baselineskip plus -1fil
\begin{IEEEbiographynophoto}{Christopher H. Onder} received the Diploma in mechanical engineering and Doctoral
degree in Doctor of technical sciences from ETH Zürich, Zürich, Switzerland.

He is a Professor with the Institute for Dynamic Systems and Control, Department of
Mechanical Engineering and Process Control, ETH Zürich. He heads the Engine Systems
Laboratory, and has authored and co-authored numerous articles and a book on modeling and
control of engine systems. His research interests include engine systems modeling, control
and optimization with an emphasis on experimental validation, and industrial cooperation.

Prof. Dr. Onder was the recipient of the BMW scientific award, the ETH medal, the
Vincent Bendix award, the Crompton Lanchester Medal, and the Arch T. Colwell award.
Additionally, he was awarded the Watt d’Or, the energy efficiency price of the
Swiss Federal Office of Energy, on multiple occasions for his projects. 
\end{IEEEbiographynophoto}
\vfill
\end{document}